\newcommand{\calP}{\mathcal{P}}
\renewcommand{\d}{\mathrm{d}}
\newcommand{\eps}{\epsilon}
\newcommand{\ignore}[1]{}
\newcommand{\poly}{\mathrm{poly}}
\newcommand{\R}{\mathbb{R}}
\newcommand{\sgn}{\mathrm{sgn}}
\newcommand{\distance}{dist}
\newcommand{\myparagraph}[1]{\medskip\noindent {\bf #1\ }}
\newtheorem{theorem}{Theorem}[section]
\newtheorem{corollary}[theorem]{Corollary}
\newtheorem{lemma}[theorem]{Lemma}
\newtheorem{proposition}[theorem]{Proposition}
\theoremstyle{definition}
\newtheorem{definition}[theorem]{Definition}
\theoremstyle{plain}
\newcommand{\X}{\mathcal X} 
\renewcommand{\P}{\mathbb P} 
\newcommand{\prior}{\pi} 
\newcommand{\E}{\mathbb E} 
\newcommand{\nats}{\mathbb{N}} 
\newcommand{\reals}{\mathbb{R}} 
\newcommand{\citet}{\cite}
\newcommand{\citep}{\cite}
\begin{document}

\title{Active Property Testing}
\author{
Maria-Florina Balcan\footnote{
Georgia Institute of Technology, 
School of Computer Science.
Email: {\tt ninamf@cc.gatech.edu}. Supported in part by NSF grant CCF-0953192,
AFOSR grant FA9550-09-1-0538, a Microsoft Faculty Fellowship and a Google 
Research Award.} \and 
Eric Blais\footnote{
Carnegie Mellon University,
Computer Science Department.
Email: {\tt eblais@cs.cmu.edu}.} \and 
Avrim Blum\footnote{
Carnegie Mellon University,
Computer Science Department,
Email: {\tt avrim@cs.cmu.edu}.  Supported in part by the National Science 
Foundation under grants CCF-0830540, CCF-1116892, and IIS-1065251.} \and 
Liu Yang\footnote{
Carnegie Mellon University,
Machine Learning Department,
Email: {\tt liuy@cs.cmu.edu}. Supported in part by NSF grant IIS-1065251 and a 
Google Core AI grant.}}

\date{}
\maketitle

\begin{abstract}
One of the motivations for property testing of boolean functions is the idea
that testing can provide a fast preprocessing step before
learning.  However, in most machine learning applications, it is not possible
to request for labels of fictitious examples constructed by the algorithm.
Instead, the dominant query paradigm
in applied machine learning, called {\em active learning}, is one
where the algorithm
may query for labels, but {\em only on points in a given polynomial-sized 
(unlabeled) sample}, drawn from some underlying distribution $D$.
In this work, we bring this well-studied model in learning to the domain of
{\em testing}.  

We develop both general results for this {\em active testing} model as
well as efficient testing algorithms for a number of important
properties for learning, demonstrating that testing can still yield
substantial benefits in this restricted setting. For example, we show
that testing unions of $d$ intervals can be done with $O(1)$ label
requests in our setting, whereas it is known to require $\Omega(d)$
labeled examples for learning (and $\Omega(\sqrt{d})$ for passive
testing \cite{KR00} where the algorithm must pay for {\em every}
example drawn from $D$).  In fact, our results for testing unions of
intervals also yield improvements on prior work in both the classic
query model (where any point in the domain can be queried) and the
passive testing model as well. For the problem of testing linear
separators in $R^n$ over the Gaussian distribution, we show that both
active and passive testing can be done with $O(\sqrt{n})$ queries,
substantially less than the $\Omega(n)$ needed for learning, with
near-matching lower bounds.  We also present a general combination
result in this model for building testable properties out of
others, which we then use to provide testers for a number of
assumptions used in semi-supervised learning.

In addition to the above results, we also develop a general
notion of the {\em testing dimension} of a given property with respect
to a given distribution, that we show characterizes (up to
constant factors) the intrinsic number of label requests needed to
test that property.  We develop such notions for both the active 
and passive testing models.  We then use these dimensions to prove a 
number of lower bounds, including for linear separators and the class
of dictator functions.

Our results show that testing can be a powerful tool in realistic
models for learning, and further that active testing exhibits an
interesting and rich structure.  Our work in addition 
brings together tools 
from a range of areas including 
U-statistics,
noise-sensitivity,
self-correction, and
spectral analysis of random matrices,
and develops new tools that may be of
independent interest.

\end{abstract}

\setcounter{page}{0}
\thispagestyle{empty}
\newpage

\ignore{ 
\section{Introduction}

\ignore{\sf [[Give an exciting story:  Property testing is great with a great
motivation as pre-learning.  But there is a problem!  Explain the
problem *well*.  What is the solution?  Present our model.  Talk about
how algorithms in our model have really exciting technical things
about them!  And we get improvements even in passive!
Moreover we even can characterize number of queries needed in both our
model and passive.  Awesome!]]}

{\color{blue}
We provide a thorough study of testing important problems motivated by real world machine learning applications that are not captured by existing models.
In addition to many concrete examples that show the advantage of our model over previous models, we also put forward a notion of testing dimension that can help characterize the intrinsic testability of various classes.
Moreover, our study also has many concrete implications about existing testing models (passive testing), and also reveals subtle difference between the theory of active and passive learning.
}

Consider the following situation: 
A group of researchers has access to a large
database of patients containing various features (height, age, family history, smoker or not, etc.) for each patient.  The researchers can run a (typically expensive) 
medical test on any of the patients in the database to determine if a patient has a given medical condition (diabetes, for example) or not.  The researchers wish to identify a good classifier for predicting which patients have the medical condition based only on the features in the database. 


The above situation describes the canonical version of the {medical diagnosis} problem in machine learning. It can be described abstractly in the following way.  Fix a set $X$, a function $f : X \to \{0,1\}$, and a distribution $D$ on $X$.  We want to identify a hypothesis function $h : X \to \{0,1\}$ that minimizes the error $\Pr_{x \sim D} [f(x) \neq h(x) ]$ while querying the value of $f$ on as few inputs from $X$ as possible. In the example above, $X$  represents the set of all possible feature vectors that could occur in the database and $f$ represents the true classifier function for the medical condition under study.  A good hypothesis $h$ corresponds to a classifier that minimizes the probability of incorrect diagnosis, given the features in the database.

One important challenge in the medical diagnosis problem is that the learning algorithm cannot query the value of the true classification function $f$ on any input of its choosing. Instead, it is restricted to ask for the value of $f$ on any input \emph{that occurs in the database}.  In situations where we can reasonably expect the distribution of patients in the database to accurately reflect the true distribution on patients, this restriction can be modeled in the following way: the learning algorithm first samples the distribution $D$ a ``moderately large'' number of times, then it chooses a small number $q$ of these samples on which to query the target function. An algorithm that (with high probability) returns a hypothesis function $h$ with low error is called an \emph{active learner}. The active learning model is being studied extensively~\cite{SOS92,CAL,TonKol01,sanjoy,BBL,BBZ07,BDL09,CN07,Hanneke07,dhs} and is the subject of a yearly competition with monetary prizes.\footnote{See http://www.causality.inf.ethz.ch/activelearning.php.}

\medskip

Returning to the medical diagnosis problem, we may expect the researchers to run some pre-processing tasks before attempting to learn a good classifier for the medical condition.  In particular, they might first test whether there \emph{exists} a classifier of a certain type (e.g., a linear threshold function, or a small-depth decision tree) before attempting to learn a classifier in the corresponding model.  This pre-processing step can be quite valuable when the medical tests are very expensive and resources are scarce.

We can also model the testing problem in an abstract way. Once again we have a set $X$, a function $f : X \to \{0,1\}$, and a distribution $D$ on $X$.  Here, however, the goal is not to identify a hypothesis function that is ``close'' to $f$ but rather to distinguish between the two cases where (a) $f$ is in some class $C$ of functions, and (b) $f$ is ``far'' from every function in $C$---i.e., $\min_{g \in C} \Pr_{x \sim D} [f(x) \neq g(x)]$ is large.  This task corresponds exactly to the problem of \emph{property testing}.  Property testing is a very active and successful research area, and many properties---or classes of functions---are known to be efficiently testable. 
Notably, query-efficient testers exist for 
linear threshold functions~\cite{MORS09a}, juntas~\cite{FKR+04,Bla09}, DNF
formulas~\cite{DLM+07}, and decision trees~\cite{DLM+07}.
(See Ron's survey~\cite{Ron08} for much more on
the connection between learning and property testing.)

All of these efficient property testers, however, rely on the ability to query the function $f$ on any input of its domain.  As we saw above, this assumption is unrealistic in the example of medical diagnosis (and, indeed, in many applied machine learning settings).  The question of interest in this setting is whether there exists an efficient tester for the class $C$ in the restricted setting where the tester first draws a ``moderately large'' number of samples from $D$ then queries the function $f$ on a small number of these samples.  In order to study this question, we introduce a new model of property testing that is analogous to the active learning model. We call this new model \emph{active property testing}.

\subsection{Our Results}

{\color{blue}[[Eric: I started re-writing the first paragraph of this section then quickly saw it balloon up in size. Since I wasn't sure how to handle it (I think we still need a really short paragraph with the truly important top-level points rather than the longer discussion below) I left it as-is for now and focused on the rest of this section first.]]}

{\color{red}
It is not immediately clear that we can test \emph{any} natural properties more efficiently than we can learn them in the
active model. Indeed, the numerous and strong lower bounds established in various models of property
testing~\cite{??} suggest that a fair amount of skepticism is in order.  Nevertheless, our first results show that two
of the most fundamental properties in machine learning---unions of intervals and linear threshold functions---have
active testers with query complexity significantly smaller than the query complexity of the best active learning algorithms
for the same properties.

These results provide two concrete examples where testing algorithms
can yield significant savings in applied learning problems. Importantly, the algorithms we introduce for 
testing unions of intervals and linear threshold functions in the active testing model
are both remarkably simple, so that (unlike other property testing algorithms with
astronomical hidden constants) they can be implemented in practice.

These results also establish hierarchy and separation results
between active testing and other models of property testing.  Notably, the result on the union of intervals shows that
the active testing model can be strictly stronger than the passive testing model introduced by Goldreich, Goldwasser,
and Ron~\cite{GGR98}, the tolerant testing model~\cite{??}, and the distribution-free model of Halevy and Kushilevitz~\cite{HK07?}.

Finally, the analyses of the testers for union of intervals for linear threshold functions show that the active testing
model is interesting to the theory community as well: in both cases, the tools and techniques from the standard 
property testing model are insufficient for establishing the correctness of the algorithms.\footnote{And, evidently, 
tools from active learning are also insufficient for analyzing testing algorithms.}  The key technical issue in
the active testing model is that any algorithm analysis or lower bound argument must handle the combination of a
\emph{random} set of drawn samples and a \emph{non-random} choice of queries from the algorithm. The results
in this paper do so with a variety of tools including noise sensitivity, concentration of measure for U-statistics, and 
the non-asymptotic analysis of random matrices.

\medskip
We next turn our attention to more general and high-level structural results concerning the active testing model.
\ldots
}

\paragraph{Unions of intervals.}
The function $f : [0,1] \to \{0,1\}$ is a \emph{union of $d$ intervals} if the set $f^{-1}(1)$ consists of at most
$d$ intervals in $[0,1]$.  This class of function is of fundamental importance in learning theory~\cite{??}.
It is known that $\Theta(d)$ queries are necessary and sufficient for learning from this class.  Kearns and
Ron~\cite{KR00} showed that the relaxed problem of distinguishing unions of $d$ intervals from
functions that are far from unions of $d/\eps$ intervals can be done with a constant number of queries,
but prior to the current work no non-trivial upper bound was known on the query complexity for testing 
unions of intervals in any property testing model.

We show that it is possible to test unions of $d$ intervals in the active testing model with a 
\emph{constant} number of queries.  This result holds over any underlying distribution (known or
unknown).  This result provides a separation result between the active testing model and the 
passive testing model (where $\Omega(\sqrt{d})$ queries are required to test unions of intervals)
and the distribution-free testing model.

The analysis of the algorithm for testing unions of intervals in the active model relies on novel use
of the \emph{noise sensitivity} of functions.  The noise sensitivity of high-dimensional boolean functions
$f : \{0,1\}^n \to \{0,1\}$ has been used in recent years to obtain a large variety of results in theoretical
computer science~\cite{??}. Our result provides the first application of this tool to functions on the line
and suggests that noise sensitivity may also have powerful applications in the analysis of low-dimensional
functions.

\paragraph{Linear threshold functions.}
The function $f : \R^n \to \{0,1\}$ is a \emph{linear threshold function} if there are $n+1$ parameters
$w_1,\ldots,w_n, \theta \in \R$ such that $f(x) = \sgn(w_1 x_1 + \cdots + w_n x_n - \theta)$ for 
every $x \in \R^n$.  Linear threshold functions are fundamental in learning theory~\cite{??}. 
Learning from this class of functions requires $\Theta(n)$ queries. By contrast, 
Matulef et al.~\cite{MORS09a} recently showed that linear threshold functions can be tested with a constant
number of queries in the standard property testing model.

We show that it is possible to test linear threshold function in the active testing model with $O(\sqrt{n})$
queries.  In fact, we show that the same bound also holds in the \emph{passive} testing model. While this 
query complexity is not as impressive as the result of Matulef et al., it is still sublinear in the query complexity
for learning linear threshold functions. Moreover, as we will discuss below, the polynomial dependence on $n$
is unavoidable in both the active and passive testing models. This result provides a separation between
the active and standard models of property testing.

The algorithm for testing linear threshold functions builds on 
 an extension of Matulef et al.'s characterization of linear threshold
functions in terms of their level-1 Hermite weight. This characterization suggests a natural sampling algorithm
for testing linear threshold functions and, indeed, our resulting testing algorithm is quite simple.
To obtain a sublinear query complexity, however, the sampling strategy in the algorithm does not
 involve a sum of independent samples. As a result, the analysis of this algorithm does not follow by
 standard Chernoff--Hoeffding bound arguments.  Instead, the sampling
approach in the algorithm corresponds to a U-statistic of degree 2.  
In the analysis of our algorithm, we show how Arcones' Theorem,
a strong concentration of measure inequality for U-statistics, can be applied to control the error of the sample
with a minimal number of queries.

\paragraph{Disjoint union of properties.}
The second part of this paper is devoted to more general and high-level structural results concerning the 
active testing model.  {\color{red} \ldots}

\paragraph{Testing dimension.}
One of the most powerful tools in learning theory is the notion of the \emph{dimension} or \emph{intrinsic
complexity} of a class of functions.  Such notions of dimension (e.g., VC dimension,
SQ dimension, Rademacher complexity) been spectacularly effective in determining the 
query complexity for learning classes of functions in various learning models.  In this work, we introduce the first notions of
\emph{testing dimension} of properties of functions. 

We show that two of these notions of testing dimension characterize (up to constant factors) the intrinsic number
of queries required to test the given property with respect to a given distribution in the active and passive testing
models, respectively.  We also introduce a simpler ``coarse'' notion of testing dimension that characterizes the set
of properties testable with a constant number of queries in the active testing model.

We use the new notions of testing dimension to obtain lower bounds on the query complexity for testing
a number of different properties in the active and passive testing models.  Notably, we show that 
$\Omega(\log n)$ queries are needed to distinguish dictator functions from random functions in the active
testing model. This shows that testing dictators is as hard as learning dictator functions, and also implies a matching
lower bound of $\Omega(\log n)$ queries for testing a large number of properties---including decision trees,
functions of low Fourier degree, juntas, DNFs---in the active testing model.

We also use the notion of active testing dimension to prove a polynomial lower bound on the query complexity for
testing linear threshold functions in the active testing model.  More precisely, our notion of dimension is used to
reduce the problem of proving the lower bound to that of bounding the operator norm of random matrices.  
This is then completed by appealing to a recent theorem of Vershynin. {\color{red} (?)}




\newpage
\section{Alternative intro}
} 

\section{Introduction}

Property testing and machine learning have many natural
connections.  In property testing, given
black-box access to an unknown boolean function $f$, one would like 
with few queries to distinguish the case that $f$ has
some given property $\calP$ (belongs to the class of functions
$\calP$) from the case that $f$ is far from any function
having that property.  In
machine learning one would like to find a 
good approximation $g$ of $f$, typically under the assumption that $f$
belongs to a given class $\calP$.  This connection is in fact a natural
motivation for property testing: to cheaply determine
whether learning with a given hypothesis class is
worthwhile~\cite{GGR98,Ron08}.  If the labeling of examples is
expensive, or if a learning algorithm is computationally expensive to
run, or if one is deciding from what source to purchase one's data,
performing a cheap test in advance could be a substantial savings. 
Indeed, query-efficient testers have been designed for many
common function classes considered in machine learning
including 
linear threshold functions~\cite{MORS09a}, 
juntas~\cite{FKR+04,Bla09}, DNF
formulas~\cite{DLM+07}, and decision trees~\cite{DLM+07}.
(See Ron's survey~\cite{Ron08} for much more on
the connection between learning and property testing.)

However, there is a disconnect between the most commonly used
property-testing and machine learning models.  Most property-testing
algorithms rely on the ability to query functions on arbitrary points
of their choosing.  On the other hand, most machine learning problems
unfortunately do not allow one to perform queries on fictitious
examples constructed by an algorithm.  Consider, for instance, a
typical problem such as machine learning for medical diagnosis.  Given
a large database of patients with each patient described by various
features (height, 
age, family history, smoker or not, etc.), one would like to learn a
function that predicts from these features whether or not a patient
has a given medical condition (diabetes, for example). To perform this
learning task, the researchers can run a (typically expensive) medical test
on any of the patients to determine if the patient has the medical
condition.  However, researchers {\em cannot} ask whether the patient would
still have the disease were the values of some of his features
changed!  Moreover, researchers cannot make up a feature vector out of
whole cloth and 
ask if that feature vector has the disease.  As another example, in
classifying documents by topic, selecting an existing document on the
web and asking a labeler ``Is this about sports or business?'' may be
perfectly reasonable. However, the typical representation of a
document in a machine learning sytem is as a vector of word-counts in
$R^n$ (a ``bag of words'', without any information about the order in
which they appear in the document).  Thus, modifying some existing
vector, or creating a new one from scratch, would not produce an object
that we could expect a human labeler to easily classify.  The key
issue is that for most problems in machine learning, the example
and the label are in fact {\em both} functions of some underlying more
complex object.  Even in cases such as image classification---e.g.,
classifying handwritten digits into the numerals they
represent---where a human labeler would be examining the same
representation as the computer, queries can be problematic because the
space of reasonable images is a very sparse subset of the entire
domain.  Indeed, now-classic experiments on membership-query learning
algorithms for digit recognition ran into exactly this problem,
leading to poor results \cite{baum}.  In this case, the problem is
that the distribution one cares about (the distribution of natural
handwritten digits) is not one that the algorithm can easily 
construct new examples from.

As a result of these issues, the dominant query paradigm in 
machine learning in recent years is not one where the algorithm can
make arbitrary 
queries, but instead is a weaker model known as {\em active
learning}~\cite{SOS92,CAL,TonKol01,sanjoy,BBL,BDL09,CN07,Hanneke07,dhs,Kol10}.
In active learning, there is an underlying distribution $D$ over
unlabeled examples (say the distribution of documents on the web,
represented as vectors over word-counts) that we assume can be sampled
from cheaply: we assume the algorithm may obtain a polynomial number
of samples from $D$.  Then, the algorithm may ask an oracle for labels
(these oracle calls are viewed as expensive), {\em but only on points
in its sample}.  The goal of the active learning algorithm is to
produce an accurate hypothesis while requesting as few labels as
possible, ideally substantially fewer than in passive learning where
{\em every} example drawn from $D$ is labeled by the
oracle.  

In this work, we bridge this gap between testing and learning by
introducing, analyzing, and developing efficient algorithms for a
model of testing that parallels active learning, which we call {\em
active testing}.  As in active learning, we assume that our algorithm
is given a polynomial number of unlabeled examples from the underlying
distribution $D$ and can then make label queries, but {\em only} over
the points in its sample.  From a small number of such queries, the
algorithm must then answer whether the function has the given
property, or is far, with respect to $D$, from any function having
that property (see Section \ref{sec:model} for formal definitions).
We show that even with this restriction, we can still 
efficiently test important properties for machine learning including
unions of intervals, linear separators, and a number of properties
considered in semi-supervised learning.  Moreover, these testers
reveal important structural characteristics of these classes.  We
additionally develop a notion of {\em testing dimension} that 
characterizes the number of examples needed to test a given property
with respect to a given distribution, much like notions of dimension
in machine learning.  We do this for both the active testing model and
the weaker {\em passive testing} model 
\cite{GGR98,KR00} in which only random sampling of a small number
points from the distribution is allowed.  In fact, as part of our
analysis, we also develop improved algorithms for several important
classes for the passive testing model as well.  Overall, our results
demonstrate that active testing exhibits an interesting and rich
structure and strengthens the connection between testing and learning.

\subsection{Our Results}

We show that for a number of important properties for
learning---including unions of intervals, linear threshold functions,
and various assumptions used in semi-supervised learning---one can
test in the active testing model with substantially fewer
labels than needed to learn.  We in addition consider the even more
stringent {\em passive testing} model introduced by Goldreich,
Goldwasser, and Ron~\cite{GGR98} (where the only operation available
to the algorithm is to draw a random labeled sample from $D$) and give
new positive results for that model as well.  We further show that for
both active and passive testing models, we can {\em characterize} (up to
constant factors) the intrinsic number of label requests needed to
test any given property $\calP$ with respect to any given distribution
$D$ in a new quantity we call the {\em testing dimension} of $\calP$
with respect to $D$.
We then use these dimension notions to prove several near-tight lower
bounds.   We expand on each of these points below.

\myparagraph{Unions of intervals.}
The function $f : [0,1] \to \{0,1\}$ is a \emph{union of $d$
intervals} if the set $f^{-1}(1)$ consists of at most $d$ intervals in
$[0,1]$.  It is known that $\Theta(d)$ queries are
necessary and sufficient for {\em learning} functions from this class.
Kearns and Ron~\cite{KR00} showed that under the uniform
distribution, the relaxed problem of distinguishing
unions of $d$ intervals from functions that are $\epsilon$-far from unions of
$d/\eps$ intervals can be done with a constant number of queries in the standard
arbitrary-query testing model, and with $O(\sqrt{d})$ samples in the
passive testing model.  However, prior to the current work, no
non-trivial upper bound was known for the problem of distinguishing
unions of $d$ intervals from functions $\epsilon$-far from unions of
$d$ intervals (as opposed to far from $d/\epsilon$ intervals).

We give an algorithm that tests unions of $d$ intervals with only
$O(1)$ queries in the active testing model.  This result holds over
{\em any} underlying distribution (known or unknown).  Moreover, in
the case that the underlying distribution is uniform, we require only
$O(\sqrt{d})$ {\em unlabeled} samples.  Thus, as a byproduct we
improve over the prior best result in the passive testing model as
well.  Note that Kearns and Ron \cite{KR00} show that
$\Omega(\sqrt{d})$ examples are required to test unions of intervals
over the uniform distribution in the passive testing model, so this
result is tight.  Moreover, one can show that in the distribution-free
testing model of Halevy and Kushilevitz~\cite{HK07} one cannot perform
testing of this class from $O(1)$ queries; thus, this class
demonstrates a separation between these models (see Appendix
\ref{app:comparison}).

At the heart of the analysis of our algorithm is a characterization of
functions that are unions of intervals in terms of their \emph{noise
sensitivity}, shown via developing a local self-corrector for this
class.  The noise sensitivity of boolean functions
is a powerful tool that has led to recent
advances in hardness of approximation~\cite{KKMO07,MOO10}, learning theory~\cite{KOS04,KOS08,DHK+10},
and differential privacy~\cite{CKKL12}.  (See also~\cite{OD03} for more details on the
applications of noise sensitivity to the study of boolean
functions.)
Our work presents a novel application of noise sensitivity in the domain of property testing.

\myparagraph{Linear threshold functions.}
The function $f : \R^n \to \{0,1\}$ is a \emph{linear threshold
function} if there are $n+1$ parameters $w_1,\ldots,w_n, \theta \in
\R$ such that $f(x) = \sgn(w_1 x_1 + \cdots + w_n x_n - \theta)$ for
every $x \in \R^n$.  Linear threshold functions are perhaps the most
widely-used function class in machine learning.  
We show that both active and passive
testing of testing linear threshold functions in $R^n$ can be done
with $O(\sqrt{n})$ labeled examples over
the Gaussian distribution.
This is substantially less than the $\Omega(n)$ labeled examples
needed for learning (even over the Gaussian distribution
\cite{Long95}) and yields a new upper bound for the passive testing
model as well.  The key challenge here is that estimating a statistic due
to Matulef et al.~\cite{MORS09a}---which can be done with $O(1)$
queries if arbitrary queries are allowed \cite{MORS09a}---would
require $\Theta(n)$ samples if done from independent pairs of random
examples in the natural way; this is no better than learning.  We overcome this
obstacle by re-using non-independent pairs of examples in the
estimation, together with an analysis and modification of the
statistic that allow for use of a theorem of Arcones \cite{Arc95} on
the concentration of U-statistics.  At a technical level, this result
uses the fact that even though typical values of $(x \cdot y)^2$ may
be quite large---i.e., $\Theta(n)$---when $x$ and $y$ have every
coordinate selected from the standard normal, for any boolean function
$f$ it will be the case that for ``most'' values $y$, the quantity
$(\E_x[f(x)x \cdot y])^2$ is quite small---which can be shown via a
Fourier decomposition of $f$.  This in turn allows one to show strong
concentration.

Interestingly, we show these bounds are nearly tight, giving lower
bounds of $\tilde{\Omega}(n^{1/3})$ and $\tilde{\Omega}(\sqrt{n})$ on
the number of labeled examples needed for active and passive testing
respectively.  The proof of these lower bounds relies on our notion of
active and passive {\em testing dimensions}.  More precisely, by using the
notion of dimension, we reduce the problem of proving the lower
bounds to that of bounding the operator norm of random matrices.  This task
is then completed by appealing to recent results on the non-asymptotic
analysis of random matrices~\cite{Ver11}.
Our lower bound demonstrates a separation between
the active model and the standard (arbitrary-query) testing model.

\myparagraph{Disjoint unions of testable properties.}
We also show that any disjoint union of testable properties remains
testable in the active testing model, allowing one to build testable
properties out of simpler components; this is then used to provide
label-efficient testers for several properties used in semi-supervised
learning including the cluster and margin assumptions.  See Section
\ref{sec:disjoint} for details. 



\myparagraph{Testing dimension.}
One of the most powerful notions in learning theory is that of the 
\emph{dimension} or \emph{intrinsic complexity} of a class of
functions.  Such notions of dimension (e.g., VC dimension \cite{VC}, SQ
dimension \cite{BFJKMR94}, Rademacher complexity \cite{BartlettM02})
have been exceedingly 
effective in
determining the sample complexity for learning classes of functions in
various learning models.  Y.~Mansour and G.~Kalai (personal
communication, see also \cite{Kalai03}) posed the question
of whether comparable notions of dimension might exist for testing.
In this work, we answer in the affirmative and introduce the first
such notions of dimension for property testing, for both our new model
of active testing and the passive testing model.

We show that these notions of testing dimension characterize (up to
constant factors) the intrinsic number of labeled examples required to
test the given property with respect to a given distribution in the active and
passive testing models, respectively.  We also introduce a simpler
``coarse'' notion of testing dimension that characterizes the set of
properties testable with $O(1)$ queries in the active
testing model.

We use these testing dimensions to obtain lower bounds on the query
complexity for testing a number of different properties in both active
and passive testing models.  Notably, we show that $\Omega(\log n)$
queries are needed to distinguish dictator functions from random
functions in both models.  This shows that active testing of
dictators is as hard as learning dictator functions, and also implies
a lower bound of $\Omega(\log n)$ queries for testing a large
number of properties---including decision trees, functions of low
Fourier degree, juntas, DNFs---in the active testing
model.\footnote{Building on this analysis, Noga Alon (personal
communication) has recently developed a stronger $\Omega(k \log n)$
lower bound for the active testing dimension of juntas via use of the Kim-Vu
polynomial method.}   



\ignore{
We further point out that unlike
the case of passive learning, there are no known strong Structural
Risk Minimization bounds for active learning, which makes the use of
testing in this setting even more compelling.\footnote{In passive
learning, if one has a collection of algorithms or hypothesis classes
to try, there is little advantage asymptotically to being told
which of these is best in advance, since one can simply apply
all of them and use an appropriate union bound.  In contrast, this is
much less clear for active learning algorithms that each might ask for
labels on different examples.}  Our techniques are quite different
from those used in the active learning literature.
}



\subsection{Related Work}
\label{sec:related}
{\bf Active learning.}
Active learning has become a topic of substantial importance in machine
learning due to the rise of applications in which unlabeled data can
be sampled much more cheaply than data can be labeled, including text
classification \cite{McNi98,TonKol01}, medical imaging \cite{IKMTC11},
and image and music retrieval \cite{TongC01,MandelPE06} among
many others \cite{Gao11,vondr,WiensG10}.  This has led to significant work in algorithmic development including a yearly active-learning competition, with monetary prizes.\footnote{See
http://www.causality.inf.ethz.ch/activelearning.php.}  There has also been
substantial progress in the theoretical understanding of its
underlying principles,
including both algorithmic guarantees and the design and analysis
of appropriate sample complexity measures for this setting
\cite{QBC,BBL,BBZ07,BDL09,CN07,greedy,dkm,dhs,Hanneke07,Kol10,BeygelzimerHLZ10,ABE11,Ailon11,Hanneke11,Minsker12}. 
Active learning, unlike passive learning, has no known strong
Structural Risk Minimization bounds, which further motivates our work.
We note that while our model is motivated by
active learning, our techniques are very different
from those in the active learning literature.
\medskip

\noindent
{\bf Other Testing Models.}
In addition to the standard model of property testing~\cite{RS96} and
the passive model of property testing~\cite{GGR98,KR00} discussed
above, other models have been introduced to address different testing
scenarios.  The \emph{tolerant} testing model, introduced by Parnas,
Ron, and Rubinfeld~\cite{PRR06} was introduced to model situations
where the tester must not only accept functions that have a given
property but also must accept functions that are close to
having the property.  The \emph{distribution-free} testing model was
introduced by Halevy and Kushilevitz~\cite{HK07} (see also
\cite{HK04,HK05, GS09,DR10}) to explore the setting where the tester
does not know the underlying distribution $D$.  Both of these models
allow arbitrary queries, however, and so do not address the machine learning settings motivating this work in which one can only query inputs from a large sample of unlabeled points.    In Appendix~\ref{app:comparison}, we discuss
the technical relations between active testing and these other models.

\ignore{
Ron's recent survey~\cite{Ron08} provides a thorough overview of the connections between
property testing and machine learning.

The ground-breaking work of Goldreich, Goldwasser, and Ron~\cite{GGR98} established some
of the fundamental relations between testing and learning.  Notably, they first showed that
testing is no harder than proper learning, that non-proper learning \emph{may} be easier than
testing, and that for some properties testing is strictly easier (in terms of query complexity) than
learning.  Their results all focused on two models: the membership query model, and the passive
model.

Kearns and Ron~\cite{KR00} focused on some properties with fundamental roles in machine
learning: testing (small) decision trees, neural nets, and unions of intervals.  They study these
properties in a relaxed setting where only functions that are far from some (large) super-class
containing the relevant property need to be rejected.  With this relaxed requirement, they
establish good bounds on the sample and query complexity for testing these properties
in the membership query and passive testing models.

Since then, almost all work on property testing has focused on the membership query model.
(To mention in particular: Diakonikolas et al.~\cite{DLM+07} give a general method, called
\emph{testing by implicit learning} that gives good bounds for testing many classes that are
relevant to machine learning, such as DNFs and decision trees.  Also, Matulef et al.~\cite{MORS09a}
study the problem of testing halfspaces.)  One notable exception (that we may or may not
want to mention) is the work of Goldreich et al.~\cite{GGL+00}, which studied monotonicity in
the passive testing model.

A model of testing closely related to our own is the \emph{distribution free} testing model
introduced by Halevy and Kushilevitz~\cite{HK07} and further studied in~\cite{HK04,HK05,
GS09,DR10}.  In this model, the tester can  sample inputs from some unknown
distribution and can query the target function on any input of its choosing.  It must then
distinguish between the case where the function has a given property $\calP$ from
the case where the function is far from the property \emph{over the unknown distribution}.

The distribution free model can almost simulate our active testing model by defining
the distribution to be uniform over some randomly-chosen subset of the domain of
polynomial size.  The key difference, however, is that in the distribution-free
model the tester can still query inputs outside that support and reject the function if
it is not consistent with functions in $\calP$ on those output.  (i.e., the tester is free
to reject functions that end up being $0$-far from $\calP$.)  This capability is in fact
critical in many of the upper bounds obtained in this model, and mean that the results
do not apply.

An extension to distribution-free testing that \emph{does} encompass the active
testing model is the \emph{tolerant} distribution-free testing model.  In this model,
for some $0 < \eps_0 < \eps_1 < 1$, the tester must accept (whp) all functions that are
$\eps_0$-close to $\calP$ while rejecting (whp) all functions that are $\eps_1$-far
from $\calP$.  We obtain the active testing model by considering the tolerant
distribution-free model with $\eps_0 = 0$ and the distribution as described above.
(In general, however, the tolerant distribution-free model may be much more
restrictive than our active testing model; so lower bounds may not apply.)
This model was studied by Kopparty and Saraf~\cite{KS09,KS10}.  Their results
implicitly give a lower bound on the query complexity for testing linearity in
the active testing model.

\begin{theorem}[Kopparty--Saraf~\cite{KS10}]
Testing linearity in the active model requires $\Theta(n)$ queries.
\end{theorem}

For completeness, we describe how the theorem follows from their work in
Section~\ref{sec:linearity}.
}

\section{The Active Property Testing Model}
\label{sec:model}

A \emph{property} $\calP$ of
 boolean functions is simply a subset of
all boolean functions.  We will also refer to properties as
\emph{classes} of functions.  The \emph{distance} of a function 
$f$ to the property $\calP$ with respect to a distribution $D$ over the
domain of the function is
$
\distance_D(f, \calP) := \min_{g \in \calP} \Pr_{x \sim D}[ f(x) \neq g(x) ].
$
A \emph{tester} for $\calP$ is a randomized algorithm that must
distinguish (with high probability) between functions in $\calP$ and
functions that are far from $\calP$. In the standard property testing
model introduced by Rubinfeld and Sudan~\cite{RS96}, a
tester is allowed to query the value of the function on any input
in order to make this decision.  We consider instead a 
model in which we add restrictions to the possible queries:

\begin{definition}[Property tester] 
\label{def:pt}
An \emph{$s$-sample, $q$-query $\eps$-tester}
for $\calP$ over the distribution $D$ is a randomized algorithm $A$
that draws a sample $S$ of size $s$ from $D$, 
queries for the value of
$f$
on $q$ points of $S$, and then
\begin{enumerate}
\vspace{-3pt}
\setlength{\itemsep}{1pt}
\setlength{\parskip}{0pt}
\item Accepts w.p. at least $\frac23$ when $f \in \calP$, and
\item Rejects w.p. at least $\frac23$ when $\distance_D(f,\calP) \ge \eps$.
\end{enumerate}
\end{definition}

We will use the
terms ``label request'' and ``query'' interchangeably.
Definition~\ref{def:pt} coincides with the standard definition of
property testing when the number of samples is unlimited and the 
distribution's support covers the entire domain. In the other extreme
case where we fix $q = s$, our definition then corresponds to the
\emph{passive testing} model of Goldreich, Goldwasser, and Ron \cite{GGR98}, 
where the inputs queried by the tester 
are sampled from the distribution.  Finally, by setting $s$ to be
polynomial in an appropriate measure of the input domain or property
$\calP$, we obtain the \emph{active testing} model that is the focus
of this paper: 

\begin{definition}[Active tester]
A randomized algorithm is a \emph{$q$-query active $\eps$-tester} for 
$\calP \subseteq \{0,1\}^n \to \{0,1\}$ over
$D$ if it is a $\poly(n)$-sample, $q$-query $\eps$-tester for $\calP$ over $D$.\footnote{
We emphasize that the name \emph{active tester} is chosen to reflect the connection
with active learning. It is \emph{not} meant to imply that this model of testing
is somehow ``more active'' than the standard property testing model.}
\end{definition}

In some cases, the domain of our functions is not $\{0,1\}^n$.  In those
cases, we require $s$ to be polynomial in some other appropriate 
measure of complexity of the domain or property $\calP$ that we
specify explicitly. 
Note that in Definition~\ref{def:pt}, since we do not have direct 
membership query access (at arbitrary points), 
our tester must accept w.p.~at least $\frac23$ when $f$ is such that
$\distance_D(f, \calP)=0$, even if $f$ does not satisfy $\calP$ over the entire
input space.    See Appendix~\ref{app:comparison} for a comparison of
active testing to other testing models.  

\ignore{
\paragraph{Comparison to Other Testing Models.}

Since the introduction of the standard model of property testing~\cite{RS96}
and the passive model of property testing~\cite{GGR98}, other models
have been introduced to address different testing scenarios.  The \emph{tolerant} testing
model, introduced by Parnas, Ron, and Rubinfeld~\cite{PRR06} was introduced
to model situations where the tester must not only accept functions that have
a given property but also must accept functions that are ``very close'' to having the
property.  The \emph{distribution-free} testing model was introduced by
Halevy and Kushilevitz~\cite{HK07} to explore the setting where the tester does
not know the underlying distribution $D$.  In Appendix~\ref{app:comparison}, we 
explore the connections between the active testing model and these four other
models of property testing.
}

\ignore{

{\sf [[Remove the discussion below?  Seemed to side-track the
reviewers.]]}

This, in
fact, is one crucial difference between our model and the
\emph{distribution-free} testing model introduced by Halevy and
Kushilevitz~\cite{HK07} and further studied in~\cite{HK04,HK05,
GS09,DR10}.  
In the distribution-free model, the tester can sample
inputs from some unknown distribution and can query the target
function on \emph{any} input of its choosing.  It must then
distinguish between the case where $f \in \calP$ from the case where
$f$ is far from the property over the 
distribution. Most testers in this model strongly rely on the ability
to query any input\footnote{Indeed, Halevy and Kushilevitz's original
motivation for introducing the model was to better model PAC learning
in the \emph{membership query} model~\cite{HK07}.} and, therefore,
these algorithms are not valid active testers.

In fact, the case of dictator functions, functions
$f:\{0,1\}^n\rightarrow \{0,1\}$ such that $f(x)=x_i$ for some
$i\in[n]$, helps to illustrate the distinction between active testing
and the standard (membership query) testing model.  The 
dictatorship property is testable with $O(1)$ membership
queries~\cite{BGS98,PRS03}.  In contrast, with active 
testing, the query complexity is the same as needed for learning:
\begin{theorem}
\label{thm:active-dictator}
Active testing of dictatorships under the uniform distribution
requires  $\Omega(\log n)$ queries.  This holds even
for distinguishing dictators from random functions. 
\end{theorem}
This result, which we prove in Section \ref{sec:dict} as an
application of the active testing dimension defined in Section
\ref{sec:dim}, points out that 
the constraints imposed by active testing present real challenges.
Nonetheless, we show that for a number of interesting properties we
can indeed perform active testing with substantially fewer queries than
needed for learning or passive testing.  In some cases, we will even
provide improved bounds for passive testing in the process as well.
}

\ignore{
\subsection{Our Results [[remove this section \& incorporate into
later sections]]}

We have two types of results. Our first results, on the testability of unions of intervals and 
linear threshold functions, show that it is indeed possible to test properties of interest to the learning
community efficiently in the active model.  Our next results, concerning the testing of
disjoint unions of properties and a new notion of testing dimension, examine the active testing model
from a more abstract point of view.  We describe these results and some of their applications below.


\bigskip
\noindent {\bf Testing Unions of Intervals.}
The function $f : [0,1] \to \{0,1\}$ is a \emph{union of $d$ intervals} if there are at
most $d$ non-overlapping intervals $(\ell_1,u_1),\ldots,(\ell_d, u_d)$
such that $f(x) = 1$ iff $\ell_i \le x \le u_i$ for some $i \in [d]$.
The VC dimension of this class is $2d$, so learning a union of $d$
intervals requires at least $\Omega(d)$ queries.  By contrast, we show
that testing unions of $d$ intervals can be done with a number of
label requests that is \emph{independent} of $d$, for
any distribution $D$:



\begin{theorem}
For any (unknown) distribution $D$, testing unions of $d$ intervals in
the active testing model can be 
done using only $O(1/\epsilon^{4})$ queries.  In the case of the
uniform distribution, we further need only $O(\sqrt{d}/\epsilon^5)$
unlabeled examples.
\end{theorem}

We note that Theorem~\ref{thm:ui-act} not only gives the first result
for testing unions of intervals in the active testing model, but it
also improves on the previous best results for testing this class in
the membership query and passive models.  Previous testers used $O(1)$
queries in the membership query model and $O(\sqrt{d})$ samples in the
passive model, but applied only to a relaxed setting in which only
functions that were $\eps$ far from unions of $d' = d/\eps$ intervals
had to be rejected with high probability~\cite{KR00}.  Our tester
immediately yields the same query bound as a function of $d$ (active
testing with $O(\sqrt{d})$ unlabeled examples directly implies passive
testing with $O(\sqrt{d})$ labeled examples) but rejects any function
that is $\eps$-far from unions of $d' = d$ intervals.  Note also that
Kearns and Ron~\cite{KR00} show that $\Omega(\sqrt{d})$ samples are
required to test unions of $d$ intervals in the passive model, and so
our bound on the number of unlabeled examples in Theorem
\ref{thm:ui-act} is optimal in terms of $d$.

The proof of Theorem~\ref{thm:ui-act} relies on a new \emph{noise
sensitivity} characterization of the class of unions of $d$ intervals.
That is, we show that all unions of $d$ intervals have low noise
sensitivity while all functions that are far from this class have
noticeably larger noise sensitivity and introduce a tester that
estimates the noise sensitivity of the input function.  We describe
these results in Section~\ref{sec:ui}.


\bigskip
\noindent{\bf Testing Linear Threshold Functions.}
We next study the problem of testing linear threshold functions (or LTFs),
namely the class
of boolean functions $f : R^n \to \{0,1\}$ of the form 
$f(x) = \sgn(w_1 x_1 + \cdots + w_n x_n - \theta)$ where $w_1,\ldots,w_n, \theta \in \R$.
LTFs can be tested with $O(1)$ queries in
the membership query model~\cite{MORS09a}.
While we show this is not possible in the active
testing model, we nonetheless show we can substantially improve
over the number of label requests needed for {\em learning}.  In
particular, learning LTFs requires $\Theta(n)$ labeled
examples, even over the Gaussian distribution~\cite{Long95}.  We show
that the query and sample complexity for {\em testing} LTFs is significantly
better:

\begin{theorem}
\label{thm:ltf}
We can efficiently test LTFs under the Gaussian distribution with
$\tilde{O}(\sqrt{n})$ labeled examples in both active and passive
testing models.  Furthermore, we have lower bounds of
$\tilde{\Omega}(n^{1/3})$ and $\tilde{\Omega}(\sqrt{n})$ on the number
of labels needed for active and passive testing respectively.
\end{theorem}

The proof of the upper bound in the theorem relies on a recent
characterization of LTFs by the Hermite weight distribution of the
function~\cite{MORS09a} as well as a new concentration of measure
result for U-statistics.  The proof of the lower bound involves
analyzing the distance between the label distribution of an LTF formed
by a Gaussian weight vector and the label distribution of a random
noise function.
See Section~\ref{sec:ltf} for details.

\bigskip
\noindent{\bf Testing Disjoint Unions of Testable Properties.}
Given a collection of properties $\calP_i$, a natural way to combine
them is via their disjoint union.  E.g., perhaps our data falls into $N$
well-separated regions, and while we suspect our data overall
may not be linearly separable, we believe it may be linearly separable
(by a different separator) in each region.  We show that if each
individual property $\calP_i$ is testable (in this case, $\calP_i$
is the LTF property) then their disjoint union $\calP$ is testable as
well, with only a very small increase in the total number of queries.
It is worth noting that this property does {\em not} hold for passive 
testing.  We present this result in Section \ref{sec:disjoint}, and
use it inside our testers for semi-supervised learning properties
discussed below.

\bigskip
\noindent{\bf Testing Semi-Supervised Learning Assumptions.}
Two common assumptions considered in semi-supervised learning
\cite{ssl:book06} and active learning \cite{Das11} are (a) if data
happens to cluster then points in the same cluster should have the
same label, and (b) there should be some large margin $\gamma$ of
separation between the positive and negative region (but without
assuming the target is necessarily a linear threshold function).
Here, we show that for both properties, active testing can be done
with $O(1)$ label requests, even though these classes contain
functions of high complexity so learning (even semi-supervised or
active) requires substantially more labeled examples.  Our results for
the margin assumption use the cluster tester as a subroutine, along
with analysis of an appropriate weighted graph defined over the data.
We present our results in Section \ref{sec:disjoint} but for space
reasons, defer analysis to Appendix \ref{sec:ssl}.

\bigskip
\noindent {\bf General Testing Dimensions.}
We develop a
general notion of the {\em testing dimension} of a given property with
respect to a given distribution.  We do this for both passive and
active testing models.  We show these dimensions characterize (up to
constant factors) the intrinsic number of label requests needed to
test the given property with respect to the given distribution in the
corresponding model.  For the case of active testing we also provide a
simpler notion that characterizes whether testing with $O(1)$
label requests is possible.  We present the dimension definitions and
analysis in Section \ref{sec:dim}.


The lower bounds in this paper are given by proving lower bounds on
these dimension quantities.  In Section \ref{sec:dict}, we
prove (as mentioned above) that for the class of dictator functions,
active testing cannot be done with fewer queries than the number of
examples needed for learning, even for the problem of distinguishing
dictator functions from truly random functions.  This result
additionally implies that any class that contains dictator functions
(and is not so large as to contain almost all functions) requires
$\Omega(\log n)$ queries to test in the active model, including
decision trees, functions of low Fourier degree, juntas, DNFs, etc.
In Section~\ref{sec:ltf-dim}, we complete the proofs of the lower bounds
in Theorem~\ref{thm:ltf} on the number of queries required to test
linear threshold functions.

\smallskip
\noindent{\bf A General Active Tester.}
In addition to the properties mentioned above, we also present a
general reduction for properties $\calP$ that are ``locally
characterized'' by a simpler property $\calP'$.  This includes, for
instance, the cluster-assumption and union of intervals
properties discussed above.  In fact, the main technical part of the
proof for testing unions of intervals can be viewed as
proving that the class indeed satisfies this condition.

\begin{definition} 
\label{def:localchar}
The property $\calP$ of functions $f : D \to R$ is \emph{locally characterized}
(at radius $r$ w.r.t. the metric $\rho : D \times D \to \reals$) by the property $\calP'$ if there exist
 $\delta > \frac12, \eps' > 0$ such that for a domain $D' \subseteq D$ obtained by choosing
$x \in D$ uniformly at random and setting $D' = \{y : \rho(x,y) \le r\}$, the following two conditions
are satisfied:
\begin{enumerate}
\vspace{-8pt}
\setlength{\itemsep}{1pt}
\setlength{\parskip}{0pt}
\item When $f \in \calP$, then $\Pr_{D'}[ f|_{D'} \in \calP' ] \ge \delta$, and
\item When $f$ is $\eps$-far from $\calP$, then $\Pr_{D'}[ f|_{D'} \mbox{ is $\eps'$-far from } \calP'] \ge \delta$.
\end{enumerate}
\end{definition}

In Appendix~\ref{app:general}, we show that if $\calP$  is locally characterized
by a property $\calP'$ which can be tested in the passive model with $s$
samples, then $\calP$ can be tested in the active model with $O(s)$ samples. We
also show how our bounds on testing unions of intervals and the cluster assumption,
as well as bounds for other natural testing problems, can be obtained from this 
general reduction.

}


\section{Testing Unions of Intervals}
\label{sec:ui}

\newcommand{\calI}{\mathcal{I}}
\newcommand{\dist}{\mathrm{d}}
\newcommand{\NS}{\mathbb{NS}}
\newcommand{\NSrm}{\mathrm{NS}}

\label{sec:ui-active}

The function $f : [0,1] \to \{0,1\}$ is a \emph{union of $d$
intervals} if there are at most $d$ non-overlapping intervals
$[\ell_1,u_1], \ldots, [\ell_d, u_d]$ such that $f(x) = 1$ iff $\ell_i
\le x \le u_i$ for some $i \in [d]$.  The VC dimension of this class
is $2d$, so learning a union of $d$ intervals requires 
$\Omega(d)$ queries.  By contrast, we show that active testing of unions of $d$
intervals can be done with a number of label requests that is
\emph{independent} of $d$, for any (even unknown) distribution $D$.
Specifically, 
we prove that we can test unions of $d$ intervals in the active
testing model using only $O(1/\epsilon^4)$ label requests from a set of $poly(d,1/\epsilon)$ unlabeled examples.  Furthermore, over the uniform distribution, we need a total of only 
$O(\sqrt{d}/\epsilon^5)$ unlabeled examples.  Note that previously it
was not known how to test this class from $O(1)$ queries even in the
(standard) membership query model even over the uniform
distribution.\footnote{The best prior result achieved a relaxed guarantee of
distinguishing the case that $f$ 
is a union of $d$ intervals from the case that $f$ is $\epsilon$-far
from a union of $d/\epsilon$ intervals \cite{KR00}.}

\begin{theorem}
\label{thm:ui-act}
For any (known or unknown) distribution $D$, testing unions of $d$ intervals in
the active testing model can be 
done using only $O(1/\epsilon^{4})$ queries.  In the case of the
uniform distribution, we further need only $O(\sqrt{d}/\epsilon^5)$
unlabeled examples.
\end{theorem}

We prove Theorem \ref{thm:ui-act} by beginning with the case that the
underlying distribution is uniform over $[0,1]$, and afterwards show
how to generalize to arbitrary distributions.
Our tester is based on showing that unions of intervals have a
\emph{noise sensitivity} characterization.

\begin{definition}
Fix $\delta > 0$. The \emph{local $\delta$-noise
sensitivity} of the function $f : [0,1] \to \{0,1\}$ at $x \in [0,1]$ is
$
\mathrm{NS}_\delta(f,x) = \Pr_{y \sim_\delta x}[ f(x) \neq f(y) ],
$
where $y \sim_\delta x$ represents a draw of $y$ uniform
in $(x-\delta,x+\delta) \cap [0,1]$.
The \emph{noise sensitivity} of $f$ is
$$
\NS_\delta(f) 
=  \Pr_{x, y \sim_\delta x}[ f(x) \neq f(y) ]
$$
or, equivalently, $\NS_\delta(f) = \E_x \NSrm_\delta(f,x)$.
\end{definition}

A simple argument shows that unions of $d$ intervals have (relatively)
low noise sensitivity:

\begin{proposition}
\label{prop:NSInt}
Fix $\delta > 0$ and let $f : [0,1] \to \{0,1\}$ be a union of $d$ intervals.  Then
 $\NS_\delta(f) \le d \delta$.
\end{proposition}

\begin{proof}[Proof sketch]
Draw $x \in [0,1]$ uniformly at random and $y \sim_\delta x$.  The
inequality $f(x) \neq f(y)$ can only hold when a boundary
$b \in [0,1]$ of one of the $d$ intervals in $f$ lies in between
$x$ and $y$.  For any point $b \in [0,1]$, the probability that
$x < b < y$ or $y < b < x$ is at most $\frac\delta2$, and there
are at most $2d$ boundaries of intervals in $f$, so the proposition
follows from the union bound.
\end{proof}

The key to the tester is showing that the converse of the above
statement is  approximately true as well:
for $\delta$ small enough, every function that has
noise sensitivity not much larger than $d \delta$ is close to being a
union of $d$ intervals. (Full proof in Appendix \ref{app:ui}).

\begin{lemma}
\label{lem:NSnonInt}
Fix $\delta = \frac{\eps^2}{32d}$.  Let $f : [0,1] \to \{0,1\}$ be a function with
noise sensitivity bounded by $\NS_\delta(f) \le d\delta(1+\frac{\eps}4)$.  Then $f$ is
$\eps$-close to a union of $d$ intervals.
\end{lemma}

\begin{proof}[Proof outline]
The proof proceeds in two steps.  First, we show that so long as $f$
has low noise-sensitivity, it can be ``locally self-corrected'' to 
a function
$g : [0,1] \to \{0,1\}$ that is $\frac\eps2$-close to $f$ and is a union of at most
$d(1 + \frac\eps4)$ intervals.  We then show that $g$ -- and every other
function that is a union of at most $d(1 + \frac\eps4)$ intervals -- is
$\frac\eps2$-close to a union of $d$ intervals.

To construct the function $g$, we consider a smoothed function
$f_\delta : [0,1] \to [0,1]$ obtained by taking the convolution of $f$ and a uniform
kernel of width $2\delta$.  We define $\tau$ to be some appropriately
small parameter.  When $f_\delta(x) \le \tau$, then this means that nearly all the
points in the $\delta$-neighborhood of $x$ have the value $0$ in $f$,
so we set $g(x) = 0$.  Similarly, when $f_\delta(x) \ge 1 - \tau$, then
we set $g(x) = 1$.  (This procedure removes any ``local noise'' that
might be present in $f$.)  This leaves all the points $x$ where $\tau <
f_\delta(x) < 1 - \tau$.  Let us call these points \emph{undefined}. For
each such point $x$ we take the largest value
$y \le x$ that is defined and set $g(x) = g(y)$.
The key technical part of the proof involves showing that the construction
described above yields a function $g$ that is $\frac\eps2$-close to $f$
and that is a union of $d(1 + \frac\eps4)$ intervals.  
Due to space constraints, we defer the argument to Appendix~\ref{app:ui}.
\end{proof}

The noise sensitivity characterization of unions of intervals obtained
by Proposition~\ref{prop:NSInt} and Lemma~\ref{lem:NSnonInt}
suggest a natural approach for building a tester: design an
algorithm that estimates the noise sensitivity of the input function
and accepts iff this noise sensitivity is small enough.  This is
indeed what we do:

\smallskip

\begin{quote}
{\sc Union of Intervals Tester}( $f$, $d$, $\eps$ ) \\
\qquad Parameters: $\delta = \frac{\eps^2}{32d}, r = O(\eps^{-4})$.
\begin{enumerate}
\vspace{-8pt}
\setlength{\itemsep}{1pt}
\item For rounds $i = 1,\ldots, r$,
\begin{enumerate}
\item[1.1] Draw $x \in [0,1]$ uniformly at random.
\item[1.2] Draw samples until we obtain $y \in (x-\delta,x+\delta)$.
\item[1.3] Set $Z_i = \mathbf{1}[ f(x) \neq f(y) ]$.
\end{enumerate}
\item {\bf Accept} iff $\frac1r \sum Z_i \le d \delta( 1 + \frac\eps8)$.
\end{enumerate}
\end{quote}

\smallskip
The algorithm makes $2r = O(\eps^{-4})$ queries to the function.
Since a draw in Step 1.2 is in the desired range with probability
$2\delta$, the number of samples drawn by the
algorithm is a random variable with very tight concentration
around $r(1 + \frac1{2\delta}) = O(d/\eps^6)$.  The draw in Step 1.2
also corresponds to choosing $y \sim_\delta x$.
  As a result, the probability that
$f(x) \neq f(y)$ in a given round is exactly $\NS_\delta(f)$,
and the average $\frac1r \sum Z_i$ is an unbiased
estimate of the noise sensitivity of $f$.  By Proposition~\ref{prop:NSInt},
Lemma~\ref{lem:NSnonInt}, and Chernoff bounds, the
algorithm therefore errs with probability less than $\frac13$
provided that $r > c \cdot 1/(d\delta\eps^2) = c \cdot 32/\eps^4$ for
some suitably large constant $c$.

\medskip
\noindent{\bf Improved unlabeled sample complexity:}
Notice that by changing Steps 1.1-1.2 slightly to pick
the first pair $(x,y)$ such that $|x-y| < \delta$, we immediately improve
the unlabeled sample complexity to $O(\sqrt{d}/\eps^5)$ without affecting
the analysis.  In particular, this procedure is equivalent to picking
$x \in [0,1]$  then $y \sim_\delta x$.\footnote{Except for events of
$O(\delta)$ probability mass at the boundary.}  As a result, up to
$poly(1/\epsilon)$ terms, we also
improve over the {\em passive testing} bounds of Kearns and Ron
\cite{KR00} which are able only to distinguish the case that $f$ is a
union of $d$ intervals from the case that $f$ is $\epsilon$-far from
being a union of $d/\epsilon$ intervals.  (Their results use
$O(\sqrt{d}/\epsilon^{1.5})$ examples.)  Kearns and Ron
\cite{KR00} show that $\Omega(\sqrt{d})$ examples are necessary for
passive testing, so in terms of $d$ this is optimal.

\medskip
\noindent{\bf Active testing over arbitrary distributions:} We now
consider the case that examples are drawn from some arbitrary
distribution $D$.  First, let us consider the easier case that $D$ is
known.  In that case, we can reduce the problem of testing over
general distributions to that of testing over the uniform distribution
on $[0,1]$ by using the CDF of $D$.  In particular, given
point $x$, define $p_x =
\Pr_{y\sim D}[y \leq x]$.  So, for $x$ drawn from $D$, $p_x$ is
uniform in $[0,1]$.\footnote{We are assuming here that $D$ is
continuous and has a pdf.  If $D$ has point masses, then instead
define $p_x^L = \Pr_y[y<x]$ and $p_x^U = \Pr_y[y \leq x]$ and select
$p_x$ uniformly in $[p_x^L,p_x^U]$.}  As a result we can just replace
Step 1.2 in the tester with 
sampling until we obtain $y$ such that $p_y \in (p_x - \delta, p_x +
\delta)$.  Now, suppose $D$ is not known.  In that case, we do not
know the $p_x$ and $p_y$ values exactly.  However, we can use the fact
that the VC-dimension of the class of initial intervals on the line
equals 1 to uniformly estimate all such values from a polynomial-sized
unlabeled
sample.  In particular, $O(1/\gamma^2)$ unlabeled examples are sufficient
so that with high probability, {\em every} point $x$ has property that
the estimate $\hat{p}_x$ of $p_x$ computed with respect to the
sample (the fraction of
points in the {\em sample} that are $\leq x$) will be within
$\gamma$ of the correct $p_x$ value \cite{BEHW:JACM89}.
If we define $\hat{\NS}_\delta(f)$ to be the noise-sensitivity
of $f$ computed using these estimates, then we get
$\frac{\delta-\gamma}{\delta+\gamma}\NS_{\delta-\gamma}(f) \leq
\hat{\NS}_\delta(f) \leq
\frac{\delta+\gamma}{\delta - \gamma}\NS_{\delta+\gamma}(f)$.  This implies
that $\gamma = O(\epsilon\delta)$ is sufficient so that 
the noise-sensitivity estimates are
sufficiently accurate for the procedure to work as before.

Putting these results together, we have Theorem~\ref{thm:ui-act}.


\section{Testing Linear Threshold Functions}
\label{sec:ltf}
\newcommand{\bbN}{\mathbb{N}}
\newcommand{\bX}{\mathbf{X}}
\newcommand{\bw}{\mathbf{w}}
\newcommand{\bz}{\mathbf{z}}
\newcommand{\calA}{\mathcal{A}}
\newcommand{\calN}{\mathcal{N}}
\newcommand{\Dno}{\mathcal{D}_{\mathrm{no}}}
\newcommand{\Dyes}{\mathcal{D}_{\mathrm{yes}}}
\newcommand{\high}{\mathrm{high}}
\newcommand{\low}{\mathrm{low}}
\newcommand{\op}{\mathrm{op}}
\newcommand{\vareps}{\varepsilon}


\ignore{
In the last section, we saw how unions of intervals are
characterized by a statistic of the function -- namely, its
noise sensitivity -- that can be estimated with few queries and
used this to build our tester.  In this section, we follow the same
high-level approach for testing linear threshold functions.  In this
case, however, the statistic we will estimate is not noise sensitivity
but rather the 
sum of squares of the degree-1 Hermite coefficients
of the function.
}

\newcommand{\comment}[1]{[{\it #1}]}

A boolean function $f : \R^n \to \{0,1\}$ is a \emph{linear threshold
function} (LTF) if there exist $n+1$ real-valued parameters
$w_1,\ldots,w_n,\theta$ such that for each $x \in \R^n$, we have $f(x)
= \sgn(w_1 x_1 + \cdots w_n x_n - \theta)$.\footnote{ Here, $\sgn(z) =
\mathbf{1}[ z \ge 0 ]$ is the standard sign function.} The main result
of this section is that it is possible to efficiently test whether a
function is a linear threshold function in the active and passive
testing models with substantially fewer labeled examples than needed
for learning, along with near-matching lower bounds.

\begin{theorem}
\label{thm:ltf}
We can efficiently test linear threshold functions under the Gaussian
distribution with $O(\sqrt{n \log n})$ labeled examples in both active 
and passive testing models.  Furthermore, no (even computationally
inefficient) algorithm can test with 
$\tilde{o}(n^{1/3})$ labeled examples for active testing or
$\tilde{o}(\sqrt{n})$ labeled examples for passive testing.
\end{theorem}

Note that the class of linear threshold functions requires
$\Omega(n)$ labeled examples for {\em learning}, even over the Gaussian
distribution~\cite{Long95}.  
Linear threshold functions can be tested with a
constant number of queries in the standard (arbitrary query) property
testing model \cite{MORS09a}.

The starting point for the upper bound in Theorem~\ref{thm:ltf} is 
a characterization lemma of linear threshold functions in terms of the following self-correlation
statistic.  To be precise, we are scaling so that 
each coordinate is drawn independently from $\calN(0,1)$---so a typical
example will have length $\Theta(\sqrt{n})$.

\begin{definition}
\label{def:selfcor}
The \emph{self-correlation coefficient} of the function $f : \R^n \to \R$ is
 $\rho(f) := \E_{x,y}[ f(x) f(y) \left<x,y\right>].$
\end{definition}

\begin{lemma}[Matulef et al.~\cite{MORS09a}]
\label{lem:MORS}
There is an explicit continuous function $W : \R \to \R$ with bounded
derivative $\|W'\|_\infty \le 1$ and peak value $W(0) = \frac2\pi$ such that every
linear threshold function $f : \R^n \to \{-1,1\}$ satisfies
$
\rho(f) = W( \E_x f ).
$
Moreover, every function $g : \R^n \to \{-1,1\}$ that satisfies
$
\left| \rho(g) - W( \E_x g) \right| \le 4 \eps^3,
$
is $\eps$-close to being a linear threshold function.
\end{lemma}

The proof of Lemma~\ref{lem:MORS} relies on the Hermite decomposition of 
functions. In fact, the original characterization of Matulef et al.~\cite{MORS09a} is stated
in terms of the level-1 Hermite weight of functions. The above characterization follows
easily from their result. For completeness, we include the details in Appendix~\ref{app:ltf}.

Lemma~\ref{lem:MORS} suggests an obvious approach to testing for
linear threshold functions from random examples: simply estimate the
self-correlation 
coefficient of Definition \ref{def:selfcor} by repeatedly drawing
pairs of labeled examples $(x_i,y_i)$ 
from the Gaussian distribution in $R^n$ and computing the empirical average of
the quantities $f(x_i) f(y_i) \left<x_i,y_i\right>$ observed.  The
problem with this approach, however, is that the dot-product
$\left<x_i,y_i\right>$ will typically have magnitude $\Theta(\sqrt{n})$
(one can view it as essentially the result of an $n$ step random
walk).  Therefore to estimate the self-correlation coefficient to
accuracy $O(1)$ via independent random samples in this way would
require $\Omega(n)$ labeled examples. 
This is of course not very useful, since it is the same as the number
of labeled examples needed to \emph{learn} an LTF. 

We will be able to achieve an improved bound, however, using the
following idea: rather than
averaging over independent pairs $(x,y)$, we will draw a smaller sample
and average over all (non-independent) pairs within the sample.  That
is, we request
$q$ random labeled examples $x_1,\ldots,x_q$, and now estimate 
$\rho(f)$ by computing ${q \choose 2}^{-1} \sum_{i < j} f(x_i) f(x_j)
\left< x_i, x_j\right>$.  
Of course, the terms in the summation are no longer independent.
However, they satisfy the property that even though the quantity
$f(x)f(y)\left<x,y\right>$ is typically large, for
most values $y$,  the quantity $\E_x[f(x)f(y)\left<x,y\right>]$ is
small.  (This can be shown via a Fourier decomposition of the function
$f$.)  
This, together with additional truncation of the quantity in question,
will allow us to apply a Bernstein-type inequality for U-statistics
due to Arcones~\cite{Arc95} in order to achieve the desired concentration. 

\ignore{
\begin{definition}
The \emph{Hermite polynomials} are a set of polynomials
$h_0(x) = 1, h_1(x) = x, h_2(x) = \tfrac1{\sqrt{2}}(x^2-1),\ldots$
that form a complete orthogonal basis for (square-integrable)
functions $f : \R \to \R$ over the inner product space defined by the
inner product $\left< f, g \right> = \E_{x}[ f(x)g(x) ]$, where the 
expectation is over the standard Gaussian distribution $\calN(0,1)$.
%
For any $S \in \bbN^n$, define $H_S = \prod_{i=1}^n h_{S_i}(x_i)$.
The \emph{Hermite coefficient} of $f : \R^n \to \R$
corresponding to $S$ is
$
\hat{f}(S) = \left< f, H_S \right> =
\E_{x}[ f(x) H_S(x)]
$
and the \emph{Hermite decomposition} of $f$ is
$
f(x) = \sum_{S \in \bbN^n} \hat{f}(S) H_S(x).
$
The \emph{degree} of the coefficient $\hat{f}(S)$ is $|S| := \sum_{i=1}^n S_i$.
\end{definition}

The connection between linear threshold functions and the Hermite
decomposition of functions is revealed by the following key lemma of
Matulef et al.~\cite{MORS09a}.

\begin{lemma}[Matulef et al.~\cite{MORS09a}]
\label{lem:MORS-orig}
There is an explicit continuous function $W : \R \to \R$ with bounded
derivative $\|W'\|_\infty \le 1$ and peak value $W(0) = \frac2\pi$ such that every
linear threshold function $f : \R^n \to \{-1,1\}$ satisfies
$
\sum_{i=1}^n \hat{f}(e_i)^2 = W( \E_x f ).
$
Moreover, every function $g : \R^n \to \{-1,1\}$ that satisfies
$
\left| \sum_{i=1}^n \hat{g}(e_i)^2 - W( \E_x g) \right| \le 4 \eps^3,
$
is $\eps$-close to being a linear threshold function.
\end{lemma}

In other words, Lemma~\ref{lem:MORS-orig} shows that $\sum_i \hat{f}(e_i)^2$ characterizes
linear threshold functions. To test LTFs, it suffices to estimate this value (and the expected value of
the function) with enough accuracy.  
Matulef et al.~\cite{MORS09a} showed that $\sum_i \hat{f}(e_i)^2$ can be estimated with a 
number of queries that is independent of $n$ by querying $f$ on pairs $x,y \in \R^n$ where
the marginal distributions on $x$ and $y$ are both the standard Gaussian distribution and
where $\left< x, y \right> = \eta$ for some small (but constant) $\eta > 0$.  Unfortunately, the
same approach does not work in the active testing model since with high probability, all
pairs of samples that we can query have inner product $\left|\left< x,y \right>\right| \le O(\frac1{\sqrt{n}})$.
Instead, we rely on the following result.

\begin{lemma}
\label{lem:sumweights}
For any function $f : \R^n \to \R$, we have
$
\sum_{i=1}^n \hat{f}(e_i)^2 = \E_{x,y}[ f(x) f(y) \left< x, y \right>]
$
where $\left<x, y\right> = \sum_{i=1}^n x_i y_i$ is the standard vector
dot product.
\end{lemma}

\begin{proof}
Applying the Hermite decomposition of $f$ and linearity of expectation,
$$
\E_{x,y}[ f(x) f(y) \left<x,y\right>] =
\sum_{i=1}^n \sum_{S,T \in \bbN^n} \hat{f}(S) \hat{f}(T)
\E_x[ H_S(x) x_i] \E_y[ H_T(y) y_i].
$$
By definition, $x_i = h_1(x_i) = H_{e_i}(x)$. The orthonormality of the
Hermite polynomials therefore guarantees that
$
\E_x[ H_S(x) H_{e_i}(x) ] = \mathbf{1}[ S\!=\!e_i ].
$
Similarly, $\E_y[ H_T(y) y_i] = \mathbf{1}[ T\!=\!e_i ]$.
\end{proof}

A natural idea for completing our LTF tester is to simply sample pairs $x,y \in \R^n$
independently at random and evaluating $f(x) f(y) \left<x,y\right>$ on each pair. While this approach
does give an unbiased estimate of $\E_{x,y}[ f(x) f(y) \left<x,y\right>]$, it has poor query efficiency:
To get enough accuracy, we need to repeat this sampling strategy $\Omega(n)$ times.  (That is, 
the query complexity of this sampling approach is the same as that of \emph{learning} LTFs.)

We can improve the query complexity of the sampling strategy by instead using \emph{U-statistics}. The
U-statistic (of order 2) with symmetric kernel function $g : \R^n \times \R^n \to \R$ is 
$$
U_g^m(x^1,\ldots,x^m) := {m \choose 2}^{-1}\sum_{1 \le i < j \le m} g(x^i, x^j).
$$
Tight concentration bounds are known for U-statistics with well-behaved kernel functions. In particular,
by setting $g(x,y) = f(x) f(y) \left< x, y \right> \mathbf{1}[ \left|\left< x, y \right>\right| < \tau]$ to be an
appropriately truncated kernel for estimating $\E[ f(x) f(y) \left< x, y\right>]$, we can apply a Bernstein-type inequality due to Arcones~\cite{Arc95} to show that $O(\sqrt{n})$ samples are sufficient to 
estimate $\sum_i \hat{f}(e_i)^2$ with sufficient accuracy.  As a result, the following algorithm is a valid
tester for LTFs. 
}

The resulting {\sc LTF Tester} is given in Figure \ref{fig:ltf}.
This algorithm has two advantages. First, it is a valid tester in both the active and passive
property testing models since the $q$ inputs queried by the algorithm are all drawn independently
at random from the standard $n$-dimensional Gaussian distribution.  Second, the algorithm itself is
very simple.  As in many cases with property testing, however, the analysis of this algorithm is more 
challenging.

\begin{figure}[h!]
\begin{quote}
{\sc LTF Tester}( $f$, $\eps$ ) \\
\qquad Parameters: $\tau = \sqrt{4n\log(4n/\eps^3)}$, $m = 800\tau/\eps^3 + 32/\eps^6$.
\begin{enumerate}
\vspace{-8pt}
\setlength{\itemsep}{1pt}
\setlength{\parskip}{0pt}
\item Draw $x^{1},x^{2},\ldots,x^{m}$ independently at random from $\R^n$.
\item Query $f(x^{1}),f(x^{2}),\ldots,f(x^{m})$.
\item Set $\tilde{\mu} = \frac1m \sum_{i=1}^m f(x^{i})$.
\item Set $\tilde{\rho} = {m \choose 2}^{-1}
 \sum_{i \neq j} f(x^{i}) f(x^{j}) \left< x^{i}, x^{j} \right> \cdot \mathbf{1}[ \left|\left<x^i,x^j\right>\right| \le \tau]$.
\item {\bf Accept} iff $|\tilde{\rho} - W(\tilde{\mu})| \le 2\eps^3$.
\end{enumerate}
\vspace*{-0.3in}
\end{quote}\caption{\label{fig:ltf}{\sc LTF Tester}}
\end{figure}

Given Lemma~\ref{lem:MORS}, as noted above, the key challenge in
the proof of correctness of the {\sc LTF Tester} is 
controlling the error of the estimate $\tilde{\rho}$ of $\rho(f)$ in
Step 4, which we do  
with concentration of measure results for U-statistics.
The \emph{U-statistic} (of order 2) with symmetric kernel function $g : \R^n \times \R^n \to \R$ is 
$$
U_g^q(x^1,\ldots,x^q) := {q \choose 2}^{-1}\sum_{1 \le i < j \le q} g(x^i, x^j).
$$
U-statistics are unbiased estimators of the expectation of their kernel function and, even more importantly,
when the kernel function is ``well-behaved'', the tails of their
distributions satisfy strong concentration.
In our case, the thresholded kernel function $g(x,y) = \begin{cases}
f(x) f(y) \left< x, y \right> & |\left< x, y \right>| \le \tau \\ 0 &
\mbox{otherwise} \end{cases}$ \ \ \ \ allows us to apply Arcones' theorem.

\begin{lemma}[Arcones~\cite{Arc95}] 
\label{lem:arcones}
For a symmetric function $h : \R^n \times \R^n \to \R$, 
let $\Sigma^2 = \E_x[ \E_y[ h(x,y) ]^2 ] - \E_{x,y}[ h(x,y) ]^2$, let $b = \|h - \E h\|_{\infty}$, and
let $U_q(h)$ be a random variable obtained by drawing $x^1,\ldots,x^q$ independently at random
and setting $U_q(h) = {q \choose 2}^{-1} \sum_{i < j} h(x^{i},x^{j})$.
Then for every $t > 0$,
$$
\Pr[ |U_q(h) - \E h| > t ] \le 4 \exp\left( \frac{ qt^2 }{ 8 \Sigma^2 + 100bt } \right).
$$
\end{lemma}

An argument combining Lemma~\ref{lem:arcones} with a separate argument showing that $g$ is ``close'' to
an unbiased estimator for $\rho(f)$ provides the desired guarantee for
the {\sc LTF Tester}.  The complete proof is
presented in Appendix~\ref{app:ltf}.


\medskip
It is natural  to ask whether we can further improve the query complexity of the tester for
linear threshold functions by using U-statistics of higher order.  The lower bound in Theorem~\ref{thm:ltf}
shows that this---or any other possible active or passive testing
approach---cannot yield a query 
complexity sub-polynomial in $n$.  We defer the discussion of this lower bound to Section~\ref{sec:dim},
where we will use the notion of testing dimension to establish the bound.

\ignore{

\subsection{Lower bound}

For the problem of testing LTFs, we have the following lower bounds.

\begin{theorem}
\label{thm_1_first}
Passive testing of LTFs requires $\tilde{\Omega}(\sqrt{n})$
samples; active testing of LTFs requires
$\tilde{\Omega}(n^{1/4})$ queries.
\end{theorem}

In fact, these lower bounds apply even to testing LTFs against random noise.
We prove the lower bound for passively testing LTFs 
by comparing the label distribution produced by 
the LTF formed by a Gaussian weight vector $w$
compared to the random noise function.
The idea is that the random noise function for $K$ 
labels can be thought of as a thresholded $K$-dimensional
standard normal random variable ${\rm N}(0, I_{K \times K})$.
Furthermore, given the data matrix $X \in \{-1,+1\}^{K \times n}$, 
the distribution of $z = X w$ is also a multivariate Gaussian 
distribution, with covariance between $z_i / \sqrt{n}$ and $z_j / \sqrt{n}$
equal $\frac{1}{n} x_i^{T} x_j$.  Thus, to bound the distance between
the label distribution of the random LTF and the random noise function,
it suffices to bound the distance between these two multivariate Gaussian
distributions.

If these $x_i$ and $x_j$ are independent uniform $\{-1,+1\}^{n}$, 
standard concentration inequalities bound the entries of the covariance
matrix in the range $\pm \tilde{O}\left(n^{-1/2}\right)$ with high probability.
The distance between two zero-mean multivariate Gaussian 
distributions can be bounded in purely terms of their covariance matrices,
and plugging the above facts into this bound shows that the 
distance is $\tilde{O}(K^2 / n)$, where $K$ is the number of data points.
Thus, for some $K = \tilde{\Omega}(\sqrt{n})$, with any fewer than $K$ 
data points, these two distributions will be quite similar.
See Appendix \ref{app:ltf}.
}

\section{Testing Disjoint Unions of Testable Properties}
\label{sec:disjoint}
We now show that active testing has the feature that a disjoint union
of testable properties is testable, with a number of queries that is
independent of the size of the union; this feature does not hold for
passive testing.  In addition to providing insight into the
distinction between the two models, this fact will be useful in our
analysis of semi-supervised learning-based properties mentioned below
and discussed more fully in Appendix
\ref{sec:ssl}. 

Specifically, given properties $\calP_1, \ldots, \calP_N$ over domains
$X_1, \ldots, X_N$, define their disjoint union $\calP$ over domain $X
= \{(i,x): i
\in [N], x \in X_i\}$ to be the set of functions $f$ such that $f(i,x)
= f_i(x)$ for some $f_i \in \calP_i$.  In addition, for any distribution
$D$ over $X$, define $D_i$ to be the conditional distribution over
$X_i$ when the first component is $i$.  If each
$\calP_i$ is testable over $D_i$ then $\calP$ is testable over $D$
with only small overhead in the number of queries:
\begin{theorem}
\label{thm:disjoint}
Given properties $\calP_1, \ldots, \calP_N$,
if each $\calP_i$ is testable over $D_i$ with $q(\epsilon)$ queries
and $U(\epsilon)$ unlabeled samples, then their disjoint union $\calP$
is testable over the combined distribution $D$ with
$O(q(\epsilon/2)\cdot (\log^3 \frac{1}{\epsilon}))$ queries and
$O(U(\epsilon/2)\cdot (\frac{N}{\epsilon}\log^3\frac{1}{\epsilon}))$
unlabeled samples.
\end{theorem}
\begin{proof}
See Appendix \ref{app:disjoint}.
\end{proof}
As a simple example, consider $\calP_i$ to contain just the constant
functions {\bf 1} and {\bf 0}.  In this case, $\calP$ is equivalent to
what is often called the ``cluster assumption,'' used in
semi-supervised and active learning \cite{ssl:book06,Das11}, that if
data lies in some number of clearly identifiable clusters, then all
points in the same cluster should have the same label.  Here, each
$\calP_i$ individually is 
easily testable (even passively) with $O(1/\epsilon)$ labeled
samples, so Theorem \ref{thm:disjoint} implies the cluster assumption
is testable with $poly(1/\epsilon)$ queries.\footnote{Since
the $\calP_i$ are so simple in this case, one can actually test with
only $O(1/\epsilon)$ queries.}  However, it is not hard to see that
passive testing with $poly(1/\epsilon)$ samples is not possible and in
fact requires $\Omega(\sqrt{N}/\epsilon)$ labeled
examples.\footnote{Specifically, suppose region 1 has $1 - 2\epsilon$
probability mass with $f_1 \in \calP_1$, and suppose the other regions
equally share the remaining $2\epsilon$ probability mass and either
(a) are each pure but random (so $f \in \calP$) or (b) are each 50/50
(so $f$ is $\epsilon$-far from $\calP$).  Distinguishing these cases requires
seeing at least two points with the same index $i
\neq 1$, yielding the $\Omega(\sqrt{N}/\epsilon)$ bound.}

We build on this to produce testers for
other properties often used in semi-supervised learning.  
In particular, one common assumption used (often called the {\em margin}
or {\em low-density} assumption) is that  there should be some
large margin $\gamma$ of separation between the positive and negative
regions (but without 
assuming the target is necessarily a linear threshold function).
Here, we give a tester for this property, which uses 
a tester for the cluster property as a subroutine, along
with analysis of an appropriate weighted graph defined over the data.
Specifically, we prove the following result (See Appendix
\ref{sec:ssl} for definitions and analysis). 

\begin{theorem}\label{thm:margin}
For any $\gamma$, $\gamma' = \gamma(1-1/c)$ for constant $c>1$, for
data in the unit ball in $R^d$ for constant $d$, we can
distinguish the case that $D_f$ has margin $\gamma$ from the case that
$D_f$ is $\epsilon$-far from margin $\gamma'$ using Active Testing
with $O(1/(\gamma^{2d}\epsilon^{2}))$ unlabeled examples and
$O(1/\epsilon)$ label requests.
\end{theorem}



\section{General Testing Dimensions}
\label{sec:dim}

\newcommand{\Fair}{{\rm Fair}}
\newcommand{\dadaptive}{d_{aa}}
\newcommand{\dpassive}{d_{passive}}
\newcommand{\dcoarse}{d_{coarse}}
\newcommand{\dactive}{d_{active}}
\newcommand{\Priors}{\Pi}
\newcommand{\PinC}{\Priors_{0}}
\newcommand{\PfarC}{\Priors_{\epsilon}}
\newcommand{\pin}{\prior}
\newcommand{\pfar}{\prior^{\prime}}
\newcommand{\prand}{\prior^{{\rm rand}}}
\newcommand{\Good}{{\rm Good}}
\newcommand{\Bad}{{\rm Bad}}
\newcommand{\ind}{\mathbb{I}}

The previous sections have discussed upper and lower bounds for a
variety of classes.  Here, we define notions of
{\em testing dimension} for passive and active testing that
characterize (up to constant factors) the number of labels needed for
testing to succeed, in the corresponding testing protocols.
These will be distribution-specific notions (like SQ
dimension \cite{BFJKMR94} or Rademacher complexity \cite{BartlettM02}
in learning), so let us fix some distribution $D$ over the 
instance space $X$, and furthermore
fix some value $\epsilon$ defining our goal. I.e., our goal is to
distinguish the case that $\distance_D(f,\calP)=0$ from the case
$\distance_D(f,\calP) \geq \epsilon$.

For a given set $S$ of unlabeled points, 
and a distribution $\prior$ over boolean functions,
define $\prior_{S}$ to be the distribution over labelings of $S$ induced by $\prior$.
That is, for $y \in \{0,1\}^{|S|}$ let $\prior_S(y) = \Pr_{f \sim
\prior}[f(S)=y]$.
We now use this to define a distance between distributions.  Specifically,
given a set of unlabeled points $S$
and two distributions $\prior$ and $\prior^{\prime}$ over boolean functions,
define
\[\dist_{S}(\prior,\prior^{\prime}) = (1/2)\sum_{y\in \{0,1\}^{|S|}} | \prior_{S}(y) - \prior^{\prime}_{S}(y) |,\]
to be the variation distance between $\prior$ and $\prior'$ induced by
$S$.
Finally, let $\PinC$ be the set of all distributions
$\prior$ over functions in $\calP$, 
and let set $\PfarC$ be the set of all distributions $\prior'$ in
which a $1-o(1)$ probability mass is over functions at least
$\epsilon$-far from $\calP$.  
We are now ready to formulate our notions of dimension.

\subsection{Passive Testing Dimension}

\begin{definition}
Define the passive testing dimension, $\dpassive = \dpassive(\calP,D)$,
as the largest $q \in \nats$ such that, 
\[ \sup_{\pin \in \PinC} \sup_{\pfar \in \PfarC} \Pr_{S \sim D^q} ( 
\dist_S(\pin,\pfar) > 1/4 ) \leq 1/4.\] 
\end{definition}
That is, there exist distributions $\pin \in \PinC$ and $\pfar \in
\PfarC$ such that a random set $S$ of $\dpassive$ examples has a
reasonable probability (at least $3/4$) of having the property that
one cannot reliably distinguish a random function from $\pin$ versus a
random function from $\pfar$ from just the labels of $S$.  From the
definition it is fairly immediate that $\Omega(\dpassive)$ examples
are {\em necessary} for passive testing; in fact, one can show that
$O(\dpassive)$ are sufficient as well.
\begin{theorem}
\label{thm:passivedim}
The sample complexity of passive testing property $\calP$ over
distribution $D$ is $\Theta(\dpassive(\calP,D))$.
\end{theorem}
\begin{proof}
See Appendix \ref{app:dimension}.
\end{proof}
\paragraph{Connections to VC dimension.}
This notion of dimension brings out an interesting connection between
learning and testing.  In particular, consider the special case that
we simply wish to distinguish functions in $\calP$ from truly random
functions, so $\pfar$ is the uniform distribution over all functions
(this is indeed the form used by our lower bound results in Sections
\ref{sec:dict} and \ref{sec:ltf-dim}).  In that case, the passive
testing dimension becomes the largest $q$ such that for some
(multi)set $F$ of functions $f_i \in \calP$, a typical sample $S$ of
size $q$ would have all $2^q$ possible labelings occur {\em
approximately the same number of times} over the functions $f_i \in
F$.  In contrast, the {\em VC-dimension} of $\calP$ is the largest $q$
such that for some sample $S$ of size $q$, each of the $2^q$ possible
labelings occurs {\em at least once}.  Notice there is a kind of
reversal of quantifiers here: in a distributional version of
VC-dimension where one would like a ``typical'' set $S$ to be
shattered, the functions that induce the $2^q$ labelings could be
different from sample to sample.  However, for the testing dimension,
the set $F$ must be fixed in advance.  That is the reason that it is
possible for a tester to output ``no'' even though the labels observed
are still consistent with some function in $\calP$.

\subsection{Active Testing Dimension}

For the case of active testing, there are two complications. First,
the algorithms can examine their entire $poly(n)$-sized unlabeled
sample before deciding which points to query, and secondly they may in
principle determine the next query based on the responses to the
previous ones (even though all our algorithmic results do not require
this feature).  If we merely want to distinguish those properties that
are actively testable with $O(1)$ queries from those that are not,
then the second complication disappears and the first is simplified as
well, and the following coarse notion of dimension suffices.
\begin{definition}
Define the coarse active testing dimension, $\dcoarse = \dcoarse(\calP,D)$,
as the largest $q \in \nats$ such that, 
\[ \sup_{\pin \in \PinC} \sup_{\pfar \in \PfarC} \Pr_{S \sim D^q} ( 
\dist_S(\pin,\pfar) > 1/4 ) \leq 1/n^q.\] 
\end{definition}
\begin{theorem}
\label{thm:coarse}
If $\dcoarse(\calP,D) = O(1)$ the active testing of $\calP$ over $D$
can be done with $O(1)$ queries, and if $\dcoarse(\calP,D) = \omega(1)$ then it cannot.
\end{theorem}
\begin{proof}
See Appendix \ref{app:dimension}.
\end{proof}

To achieve a more fine-grained characterization of active testing we
consider a slightly more involved quantity, as follows.  First,
recall that given an unlabeled sample $U$ and distribution $\prior$
over functions, we define $\prior_U$ as the induced distribution over
labelings of $U$.  We can view this as a distribution over {\em unlabeled}
examples in $\{0,1\}^{|U|}$.  Now, given two distributions over
functions $\prior, \prior'$, define 
$\Fair(\pin,\pfar,U)$ to be the distribution over {\em labeled}
examples $(y,\ell)$ defined as: with probability $1/2$ choose $y \sim
\pin_U$, $\ell=1$ and with probability $1/2$ choose $y \sim
\pfar_U$, $\ell=0$.  Thus, for a given unlabeled sample $U$, the sets
$\PinC$ and $\PfarC$ define a {\em class} of fair distributions over
labeled examples.  The active testing dimension, roughly,
asks how well this class can be approximated by the class of low-depth
decision trees.  Specifically, let ${\rm DT}_{k}$ denote the class
of decision trees of depth at most $k$.  
The active testing dimension for a given number $u$ of allowed
unlabeled examples is as follows:
\begin{definition}
Given a number $u=poly(n)$ of allowed unlabeled examples, we define
the active testing dimension, $\dactive(u) = \dactive(\calP,D,u)$, as the largest $q \in \nats$
such that 
\[ \sup_{\pin \in \PinC} \sup_{\pfar \in \PfarC} \Pr_{U \sim
D^{u}}({\rm err}^{*}({\rm DT}_{q}, \Fair(\pin,\pfar,U)) < 1/4) \leq
1/4,\] 
where  ${\rm err}^{*}(H,P)$ is the error of the optimal function
in $H$ with respect to data drawn from distribution $P$ over labeled examples.
\end{definition}

\begin{theorem}
\label{thm:activedim}
Active testing of property $\calP$ over distribution $D$ with failure
probability $\frac{1}{8}$ using $u$ 
unlabeled examples requires $\Omega(\dactive(\calP,D,u))$ label queries, and
furthermore can be done with $O(u)$ unlabeled examples and
$O(\dactive(\calP,D,u))$ label queries.
\end{theorem}
\begin{proof}
See Appendix \ref{app:dimension}.
\end{proof}

\noindent
We now use these notions of dimension to prove lower bounds for testing several properties.

\subsection{Application: Dictator functions}
\label{sec:dict}
\newcommand{\bc}{\mathbf{c}}
\newcommand{\bQ}{\mathbf{Q}}

We prove here that active testing of
dictatorships over the uniform distribution requires $\Omega(\log n)$
queries by proving a $\Omega(\log n)$ lower
bound on $\dactive(u)$ for any $u=poly(n)$; in fact, this result holds
even for the specific choice of $\pfar$ as random noise (the uniform
distribution over all functions).
\begin{theorem}
\label{thm:active-dictator}
Active testing of dictatorships under the uniform distribution
requires  $\Omega(\log n)$ queries.  This holds even
for distinguishing dictators from random functions. 
\end{theorem}

\begin{proof}
Define $\pin$ and $\pfar$ to be uniform distributions over the
dictator functions and over all boolean functions, respectively.  In
particular, $\pin$ is the distribution obtained by 
choosing $i \in [n]$ uniformly at random and returning the function $f
: \{0,1\}^n \to \{0,1\}$ defined by $f(x) = x_i$.
Fix $S$ to be a set of $q$ vectors in $\{0,1\}^n$. This set can be
viewed as a $q \times n$ boolean-valued matrix. We write
$c_1(S),\ldots,c_n(S)$  to represent the columns of this matrix.
For any $y \in \{0,1\}^q$,
$$
\pin_S(y) = \frac{|\{i \in [n] : c_i(S) = y\}|}{n} \qquad \mbox{and} \qquad
\pfar_S(y) = 2^{-q}.
$$
By Lemma~\ref{lem:ptlb}, to prove that $\dactive \geq \frac12 \log n$, it
suffices to show that when $q < \frac12 \log n$ and $U$ is a set
of $n^c$ vectors chosen uniformly and independently at random from
$\{0,1\}^n$, then with probability at least $\frac34$, every set $S
\subseteq U$ of size $|S| = q$ and  
every $y \in \{0,1\}^q$ satisfy $\pin_S(y) \le \frac65 2^{-q}$.  (This
is like a stronger version of $\dcoarse$ where $\dist_S(\pin,\pfar)$
is replaced with an $L_\infty$ distance.)

Consider a set $S$ of $q$ vectors chosen uniformly and independently
at random from $\{0,1\}^n$.  For any vector $y \in \{0,1\}^q$, the
expected number of columns of $S$ that are equal to $y$ is 
$n 2^{-q}.$
Since the columns are drawn independently at random, Chernoff bounds
imply that 
$$
\Pr\left[ \pin_S(y) > \tfrac65  2^{-q} \right] \le
e^{-(\frac{1}{5})^2n2^{-q}/3}  < e^{-\frac{1}{75} n 2^{-q}}.
$$
By the union bound, the probability that there exists a vector $y \in
\{0,1\}^q$ such that more than $\frac65 n 2^{-q}$ columns of $S$ are
equal to $y$ is at most $2^q e^{-\frac{1}{75} n 2^{-q}}$.  Furthermore, 
when $U$ is defined as above, we can apply the union bound once again
over all subsets $S \subseteq U$ of size $|S| = q$ to obtain
$
\Pr[ \exists S, y  : \pin_S(y) > \tfrac65  2^{-q} ] < n^{cq} \cdot 2^{q} \cdot 
e^{-\frac{1}{75} n 2^{-q}}.
$
When $q \le \frac 12 \log n$, this probability is bounded above by 
$e^{\frac c 2 \log^2 n + \frac12 \log n - \frac{1}{75} \sqrt{n}}$,
which is less than $\frac14$ when $n$ is large enough, as we wanted to show.
\end{proof}

\subsection{Application: LTFs}
\label{sec:ltf-dim}

The testing dimension also lets us prove the lower bounds in Theorem~\ref{thm:ltf} regarding the query
complexity for testing linear threshold functions. Specifically, those bounds follow directly from the 
following result.

\begin{theorem}
\label{thm:ltf-dimensions}
For linear threshold functions under the standard $n$-dimensional Gaussian distribution,
$\dpassive = \Omega(\sqrt{n/\log(n)})$ and $\dactive = \Omega( (n / \log(n))^{1/3})$.
\end{theorem}

Let us give a brief overview of the strategies used to obtain the $\dpassive$ and $\dactive$ bounds. The complete proofs for 
both results, as well as a simpler proof that $\dcoarse = \Omega( (n/\log n)^{1/3} )$, can be found
in Appendix~\ref{subsec:ltf-dimensions}.

For both results,
we set $\pi$ to be a distribution over LTFs obtained by choosing $w \sim \calN(0,I_{n\times n})$ and outputting 
$f(x) = \sgn(w \cdot x)$. Set $\pi'$ to be the uniform distribution over all functions---i.e., 
for any $x \in \R^n$, the value of $f(x)$ is uniformly drawn from $\{0,1\}$ and is independent of the value of $f$ on other 
inputs. 

To bound $\dpassive$, we bound the total variation distance between the distribution of $Xw / \sqrt{n}$ given $X$,
and a normal $\calN(0,I_{n\times n})$.  If this distance is small, then so must be the distance between
the distribution of $\sgn(Xw)$ and the uniform distribution over label sequences.
In fact, we show this is the case for a broad family of product distributions, characterized by a
condition on the moments of the coordinate projections.

Our strategy for bounding $\dactive$ is very similar to the one we used to prove the lower bound on the query 
complexity for testing dictator functions in the last section. Again, we want to apply Lemma~\ref{lem:ptlb}. 
Specifically, we want to show that when $q \le o((n/\log(n))^{1/3})$ and $U$ is a set of $n^c$ vectors drawn independently 
from the $n$-dimensional standard Gaussian distribution, then with probability at least $\frac34$, every set 
$S \subseteq U$ of size $|S| = q$ and almost all $x \in \R^q$, we have $\pi_S(x) \le \frac65 2^{-q}$.  The 
difference between this case and the lower bound for dictator functions is that we
now rely on strong concentration bounds on the spectrum of random matrices~\cite{Ver11} to obtain
the desired inequality.

\section{Conclusions}
\label{sec:concl}
In this work we develop and analyze a model of property testing that
parallels the active learning model in machine learning, in which
queries are restricted to be selected from a given (polynomially)
large unlabeled sample.  We demonstrate that a number of important
properties for machine learning can be efficiently tested in this
setting with substantially fewer queries than needed to learn.  These
testing algorithms bring together tools from a range of areas
including U-statistics, noise-sensitivity, and self-correction, and
develop characterizations of certain function classes that may be of
independent interest.  We additionally give a combination result
allowing one to build testable properties out of others, as well as
develop notions of intrinsic {\em testing dimension} that characterize
the number of queries needed to test, and which we then use to prove a
number of near-matching lower bounds.   In the context of testing
linear separators, for the active testing model we have an
$\tilde{O}(\sqrt{n})$ upper bound and an $\tilde{\Omega}(n^{1/3})$
lower bound; it would be very exciting if the upper bound could be
improved, but either way it would be interesting to close that gap.
Additionally, testing of linear separators over more general
distributions would be quite interesting.

\bibliographystyle{plain}
\bibliography{active,testing}

\appendix

\section{Comparison of Active Testing and Other Property Testing Models}
\label{app:comparison}
In this section, we compare the active testing model with four existing models of property testing: the standard property testing model as introduced by Rubinfeld and Sudan~\cite{RS96}, the passive testing model first studied by Goldreich, Goldwasser, and Ron~\cite{GGR98}, the tolerant property testing model introduced by Parnas, Ron, and Rubinfeld~\cite{PRR06}, and the distribution-free property testing model of Halevy and Kushilevitz~\cite{HK07}.  

\subsection{Standard and Passive Property Testing}

Fix some sets $X, Y$ and let $\calP$ be some property of functions $f : X \to Y$. Let $D$ be some distribution over $X$.  Recall that the standard model of property testing is defined as follows.

\begin{definition}[Standard Property Tester~\cite{RS96}]
A \emph{$q$-query (standard) $\eps$-tester} for $\calP$ over the distribution $D$ is a randomized algorithm $A$ that queries the value of a function $f$ on $q$ of its inputs and then
\begin{enumerate}
\vspace{-3pt}
\setlength{\itemsep}{1pt}
\setlength{\parskip}{0pt}
\item Accepts with probability at least $\frac23$ when $f \in \calP$, and
\item Rejects with probability at least $\frac23$ when $\dist_D(f,\calP) \ge \eps$.
\end{enumerate}
\end{definition}

The most commonly-studied case is where the distribution $D$ is uniform over the domain of the function.  When that is not the case, note that we can assume that the tester \emph{knows} the distribution $D$.  For the alternate model where the tester does not know $D$, see Section~\ref{app:df}.

\medskip

The \emph{passive} property testing model is similar to the standard property testing model, except that the queries made by the tester in this model are drawn at random from $D$.

\begin{definition}[Passive Property Tester~\cite{GGR98}]
A \emph{$q$-query passive $\eps$-tester} for $\calP$ over the distribution $D$ is a randomized algorithm $A$ that draws $q$ samples independently at random from $D$, queries the value of a function $f$ on each of these $q$ samples, and then
\begin{enumerate}
\vspace{-3pt}
\setlength{\itemsep}{1pt}
\setlength{\parskip}{0pt}
\item Accepts with probability at least $\frac23$ when $f \in \calP$, and
\item Rejects with probability at least $\frac23$ when $\dist_D(f,\calP) \ge \eps$.
\end{enumerate}
\end{definition}

The query complexity of a property under a given testing model is the minimum query complexity of any tester for the property in this model.  We denote the query complexity of properties in the standard, passive, and active testing models with the following notation.

\begin{definition}[Query complexity]
The \emph{query complexity} of $\calP$ over $D$ in the standard property testing model is
$$
Q_{D,\eps}(\calP) := \min\{ q > 0 :  \mbox{there exists a $q$-query $\eps$-tester for $\calP$} \}.
$$
Similarly, the query complexity of $\calP$ over $D$ in the active and passive testing models is
\begin{align*}
Q^a_{D,\eps}(\calP) &:= \min\{ q > 0 :  \mbox{there exists a $q$-query active $\eps$-tester for $\calP$} \} \\
Q^p_{D,\eps}(\calP) &:= \min\{ q > 0 :  \mbox{there exists a $q$-query passive $\eps$-tester for $\calP$} \}.
\end{align*}
\end{definition}

With this notation in place, we can now formally establish the relationship between the standard, active, and passive models of property testing.  

\begin{theorem}
For every property $\calP$, every distribution $D$, and every $\eps > 0$,
\begin{equation}
\label{eqn:hierarchy}
Q_{D,\eps}(\calP) \le Q_{D,\eps}^a(\calP) \le Q_{D,\eps}^p(\calP).
\end{equation}
Furthermore, the three testing models are distinct: there exist properties $\calP$, distributions $D$, and constants $\eps > 0$ such that $Q_{D,\eps}(\calP) < Q_{D,\eps}^a(\calP)$ and there also exist $\calP, D, \eps$ such that $Q_{D,\eps}^a(\calP) < Q_{D,\eps}^p(\calP)$.
\end{theorem}

\begin{proof}
Both inequalities in~\eqref{eqn:hierarchy} are obtained with simple arguments.  For the first inequality, note that we can always simulate an active tester in the standard property testing model by internally sampling\footnote{Note that here we use the fact that a standard property tester knows the underlying distribution $D$ and can therefore generate samples from this distribution ``for free''.} a random subset of the inputs in the domain of the function $f$ and having the active tester select from those inputs.  The second inequality follows from the fact that we can simulate a passive tester in the active testing model by querying the function on the first $Q_{D,\eps}^p(\calP)$ samples drawn at random from $D$.

The distinctness of the three models of property testing is not as immediate, but it follows from the main results in our paper.  Theorem~\ref{thm:active-dictator} shows that testing dictatorship in the active testing model requires $\Omega(\log n)$ queries. The same property can be tested with $O(1/\eps)$ queries in the standard testing model~\cite{BGS98,PRS03}, so this establishes the first strict inequality.  For the second strict inequality, consider the property of being a union of $d$ intervals.  Theorem~\ref{thm:ui-act} shows that we can test this property with $O(1/\eps^4)$ queries in the active testing model but $\Omega(\sqrt{d})$ queries are required to test the same property in the passive model~\cite{KR00}.
\end{proof}

\subsection{Tolerant Testing}

The tolerant property testing model is an extension of the standard model of property testing with one extra requirement: the tester must accept functions with a given property $\calP$ \emph{as well as} functions that are (very) close to $\calP$. Formally, the model is defined as follows.

\begin{definition}[Tolerant Property Tester~\cite{PRR06}]
Fix $0 \le \eps_1 < \eps_2 \le 1$.  A \emph{$q$-query tolerant $(\eps_1,\eps_2)$-tester} for $\calP$ over the distribution $D$ is a randomized algorithm $A$ that queries the value of a funciton $f$ on $q$ of its inputs and then
\begin{enumerate}
\vspace{-3pt}
\setlength{\itemsep}{1pt}
\setlength{\parskip}{0pt}
\item Accepts with probability at least $\frac23$ when $\dist_D(f, \calP) \le \eps_1$, and
\item Rejects with probability at least $\frac23$ when $\dist_D(f, \calP) \ge \eps_2$.
\end{enumerate}
\end{definition}

\newcommand{\tol}{\mathrm{tol}}

\begin{definition}
The \emph{query complexity} of $\calP$ over $D$ in the tolerant property testing model is
$$
Q^{\tol}_{D,\eps_1,\eps_2}(\calP) := \min\{q > 0 : \mbox{there exists a $q$-query tolerant $(\eps_1,\eps_2)$-tester for $\calP$}\}.
$$
\end{definition}


One may ask whether every property that has a query-efficient tolerant tester also has a query-efficient tester in the 
active model. Our lower bound on the query complexity for testing dictator functions in the active model gives a negative answer to this question: there are properties that require significantly more queries to test in the active model than in the tolerant testing model.

\begin{theorem}
\label{thm:active-tolerant}
There exist $\calP$, $D$, and $0 \le \eps_1 < \eps_2 \le 1$ for which $Q^{\tol}_{D,\eps_1,\eps_2}(\calP) < Q^a_{D,\eps_2}(\calP)$.
\end{theorem}

\begin{proof}
Consider the property $\calP$ of being a dictator function and let $D$ be the uniform distribution over the hypercube. Theorem~\ref{thm:active-dictator} shows that $Q^a_{D,\eps_2}(\calP) = \Omega(\log n)$.  By contrast, standard testers for dictator functions~\cite{BGS98,PRS03} are tolerant $(\eps_1,\eps_2)$-testers with query complexity $O(1/(\eps_2-\eps_1)^2)$ so the inequality in the theorem statement holds when $\eps_2 - \eps_1 = \Theta(1)$. 
\end{proof}

We believe that the tolerant and active property testing models are incomparable---i.e., that there exist
properties $\calP$ (along with distributions $D$ and parameters $\eps_1 < \eps_2$) for which the inequality in Theorem~\ref{thm:active-tolerant} is reversed and $Q^{\tol}_{D,\eps_1,\eps_2}(\calP) > Q^a_{D,\eps_2}(\calP)$.  We leave the proof (or disproof) of this assertion as an open problem.

\subsection{Distribution-free testing}
\label{app:df}

In the above property testing models, the tester knows the underlying distribution $D$.  To model the scenario where the tester does not know $D$, Halevy and Kushilevitz~\cite{HK07} introduced the distribution-free testing model. (See also~\cite{HK04,HK05,GS09,DR10}.)  The model is defined formally as follows.

\begin{definition}[Distribution-free Tester~\cite{HK07}]
An \emph{$s$-sample, $q$-query distribution-free $\eps$-tester} for $\calP$ is a randomized algorithm $A$ that draws $s$ independent samples from the (unknown) distribution $D$, queries the value of the (unknown) function $f$ on those $s$ samples and $q-s$ additional inputs of its choosing, and then
\begin{enumerate}
\vspace{-3pt}
\setlength{\itemsep}{1pt}
\setlength{\parskip}{0pt}
\item Accepts with probability at least $\frac23$ when $f \in \calP$, and
\item Rejects with probability at least $\frac23$ when $\dist_D(f,\calP) \ge \eps$.
\end{enumerate}
\end{definition}

\newcommand{\df}{\mathrm{df}}

\begin{definition}
The \emph{query complexity} of the property $\calP$ in the distribution-free model is
$$
Q^{\df}_{\eps}(\calP) := \min\{ q > 0 :  \mbox{for some $0 < s \le q$, there exists an $s$-sample, $q$-query distribution-free $\eps$-tester for $\calP$} \}.
$$
\end{definition}

Superficially, the distribution-free and active testing models appear to be similar: in both models, the tester first samples the underlying distribution $D$ and then queries the value of the function on some inputs.  The challenges in the two models, however, are mostly orthogonal and, as a result, the two models of property testing are incomparable.  This statement is made precise by the following two results.

\begin{theorem}
\label{thm:df}
There exist properties $\calP$ such that for every distribution $D$ and every large enough constant $\eps > 0$, 
$Q^a_{D,\eps}(\calP) < Q^{\df}_{\eps}(\calP)$.
\end{theorem}

\begin{proof}
Fix a large enough $d > 0$ and let $\calP$ be the property consisting of the set of unions of $d$ intervals. Theorem~\ref{thm:ui-act} shows that for every distribution $D$, we have $Q^a_{D,\eps}(\calP) = O(1/\eps^4)$. To complete the
proof of the theorem, we now show that $Q^{\df}_{\eps}(\calP) = \Omega(\sqrt{d})$.

\newcommand{\Fyes}{\mathcal{F}_{\mathrm{yes}}}
\newcommand{\Fno}{\mathcal{F}_{\mathrm{no}}}

Consider the following two distributions on pairs of functions $f : [0,1] \to \{0,1\}$ and distributions $D$ on $[0,1]$.  
For the distribution $\Fyes$, choose a set $S$ of
$d$ points sampled independently and uniformly at random from $[0,1]$.  Define $D$ to be the uniform distribution
over $S$, and let $f : [0,1] \to \{0,1\}$ be a random function defined by choosing $f(x)$ uniformly at random for every
$x \in S$ and setting $f(x) = 0$ for all $x \in [0,1] \setminus S$.  Clearly, every such function $f$ is a union of $d$ intervals.

The distribution $\Fno$ is defined similarly except that in this case we let $S$ be a set of $10d$ points.  We define $D$
to be uniform over $S$ and again define $f : [0,1] \to \{0,1\}$ by choosing $f(x)$ uniformly at random for all $x \in S$ and 
setting $f(x) = 0$ for all remaining points.  In this case, whp the resulting functions are far from unions of $d$ intervals over
$D_2$.

Let $A$ be a distribution-free tester for unions of $d$ intervals. The tester $A$ must accept with high probability when we draw a function $f$ and distribution $D$ from $\Fyes$ and it must reject with high probability when instead we draw a function and distribution from $\Fno$.  Clearly, querying the functions on points that were not drawn from the distribution $D$ will not help $A$ since with probability $1$ it will observe $f(x) = 0$ on those points.  Assume now that $A$ makes $s = o(\sqrt{d})$ draws to the distribution $D$.  By the birthday paradox, with probability $1 - o(1)$, the $s$ samples drawn from the distribution are distinct.  In this case, the distributions on the values of the function on those $s$ inputs are uniformly random so it has no way to distinguish whether the input was drawn from $\Fyes$ or from $\Fno$.  This contradicts the assumption that $A$ is a valid distribution-free tester for unions of $d$ intervals and completes the proof of the lower bound  on $Q^{\df}_{\eps}(\calP)$.
\end{proof}

\begin{theorem}
There exist properties $\calP$, distributions $D$, and parameters $\eps > 0$ such that $Q^{\df}_{\eps}(\calP) < Q^a_{D,\eps}(\calP)$.
\end{theorem}

\begin{proof}
Let $\calP$ be the property of being a dictator function, let $D$ be the uniform distribution over the hypercube, and
let $\eps > 0$ be some constant. Theorem~\ref{thm:active-dictator} shows that $Q^a_{D,\eps}(\calP) = \Omega(\log n)$.  
By contrast, Halevy and Kushilevitz~\cite{HK07} showed that it is possible to test dictator functions in the distribution-free model with a constant number of queries when $\eps$ is constant and so $Q^{\df}_{\eps}(\calP) = O(1)$.
\end{proof}

\section{Proof of a Property Testing Lemma}
\label{app:prelims}
The following lemma is a generalization of a lemma that is widely used
for proving lower bounds in property testing~\cite[Lem.~8.3]{Fis01}.
We use this lemma to prove the lower bounds on the query complexity
for testing dictator functions and testing linear threshold functions.

\begin{lemma}
\label{lem:ptlb}
Let $\pin$ and $\pfar$ be two distributions on functions $X \to \R$.
Fix $U \subseteq X$ to be a set of allowable queries.  
Suppose that for any $S \subseteq U$, $|S|=q$, there is a set $E_S
\subseteq \R^q$ (possibly empty) satisfying $\pin_S(E_S) \le \tfrac15 2^{-q}$ such that
$$
\pin_S(y) < \tfrac65 \pfar_S(y) \mbox{\ \ for every }y \in \R^q \setminus E_S.
$$
Then ${\rm err}^{*}({\rm DT}_q, \Fair(\pin,\pfar,U)) > 1/4$.
\end{lemma}
\begin{proof}
Consider any decision tree $\calA$ of depth $q$.  Each internal node
of the tree consists of a query $y \in U$ and a subset $T \subseteq
\R$ such that its children are labeled by $T$ and $\R \setminus T$,
respectively.  The leaves of the tree are labeled with either
``accept'' or ``reject'', and let $L$ be the set of leaves labeled as
accept.  Each leaf $\ell \in L$ corresponds to a set $S_\ell \subseteq
U^q$ of queries and a subset $T_\ell \subseteq \R^\ell$, where $f : X
\to \R$ leads to the leaf $\ell$ iff $f(S_\ell) \in T_\ell$.  The
probability that $\calA$ (correctly) accepts an input drawn from
$\pin$ is
$$
a_1 = \sum_{\ell \in L} \int_{T_\ell} \pin_{S_\ell}(y) dy.
$$
Similarly, the probability that $\calA$ (incorrectly) accepts an input
drawn from $\pfar$ is 
$$
a_2 = \sum_{\ell \in L} \int_{T_\ell} \pfar_{S_\ell}(y) dy.
$$
The difference between the two rejection probabilities is bounded above by
$$
a_1 - a_2 \le \sum_{\ell \in L} \int_{T_\ell \setminus E_{S_\ell}}
\pin_{S_\ell}(y) - \pfar_{S_\ell}(y) dy  + \sum_{\ell \in L}
\int_{T_\ell \cap E_{S_\ell}} \pin_{S_\ell}(y) dy.
$$
The conditions in the statement of the lemma then imply that
$$
a_1 - a_2 < \sum_{\ell \in L} \int_{T_\ell} \tfrac16 \pin_{S_\ell}(y)
dy + \tfrac56 \sum_{\ell} \int_{E_{S_\ell}} \pin_{S_\ell}(y) dy \le
\tfrac13. 
$$
To complete the proof, we note that $\calA$ errs on an input drawn
from $\Fair(\pin,\pfar,U)$ with probability  
\[
\tfrac12 (1-a_1) + \tfrac12 a_2 = \tfrac12 - \tfrac12(a_1 - a_2) > \tfrac13. \qedhere
\]
\end{proof}

\section{Proofs for Testing Unions of Intervals}
\label{app:ui}
In this section we complete the proofs of the technical results in 
Section~\ref{sec:ui}.

\newtheorem*{propNSInt}{Proposition~\ref{prop:NSInt}}
\begin{propNSInt}[Restated]
Fix $\delta > 0$ and let $f : [0,1] \to \{0,1\}$ be a union of $d$ intervals.  Then 
 $\NS_\delta(f) \le d \delta$.
\end{propNSInt}

\begin{proof}
For any fixed $b \in [0,1]$, the probability that $x < b < y$ when $x \sim U(0,1)$ and $y \sim U(x-\delta, x+\delta)$ is
$$
\Pr_{x,y}[ x < b < y ] = \int_0^\delta  \Pr_{y \sim U(b-t-\delta,b-t+\delta)}[y \ge b] \d t = \int_0^\delta \frac{\delta-t}{2\delta} \d t = \frac{\delta}4.
$$
Similarly, $\Pr_{x,y}[y < b < x] = \frac{\delta}4$.  So the probability that $b$ lies between $x$ and $y$ is at most $\frac{\delta}2$.  

When $f$ is the union of $d$ intervals, $f(x) \neq f(y)$ only if at least one of the boundaries $b_1,\ldots,b_{2d}$ of the intervals of $f$ lies in between $x$ and $y$. So by the union bound, $\Pr[ f(x) \neq f(y)] \le 2d (\delta/2) = d\delta$.  Note that if $b$ is within distance $\delta$ of 0 or 1, the probability is only lower.
\end{proof}

\newtheorem*{lemNSnonInt}{Lemma~\ref{lem:NSnonInt}}
\begin{lemNSnonInt}[Restated]
Fix $\delta = \frac{\eps^2}{32d}$.  Let $f : [0,1] \to \{0,1\}$ be any function with 
noise sensitivity $\NS_\delta(f) \le d\delta(1+\frac{\eps}4)$.  Then $f$ is 
$\eps$-close to a union of $d$ intervals.
\end{lemNSnonInt}

\begin{proof}
The proof proceeds in two steps:  We first show that $f$ is $\frac\eps2$-close to a 
union of $d(1+\frac\eps2)$ intervals, then we show that every union of $d(1 + \frac\eps2)$ 
intervals is $\frac\eps2$-close to a union of $d$ intervals.

Consider the ``smoothed'' function $f_\delta : [0,1] \to [0,1]$ defined by
$$
f_\delta(x) = \E_{y \sim_\delta x} f(y) = 
\frac1{2\delta} \int_{x-\delta}^{x+\delta} f(y) \d y.
$$
The function $f_\delta$ is the convolution of $f$ and the uniform kernel
 $\phi : \R \to [0,1]$ defined by 
$\phi(x) = \frac1{2\delta} \mathbf{1}[ |x| \le \delta ]$.  

Fix $\tau = \frac4\eps \NS_\delta(f)$.  We introduce the function $g^* : [0,1] \to \{0,1,*\}$ by setting
$$
g^*(x) = \begin{cases}
1 & \mbox{when $f_\delta(x) \ge 1 - \tau$,} \\
0 & \mbox{when $f_\delta(x) \le \tau$, and} \\
* & \mbox{otherwise}
\end{cases}
$$
for all $x \in [0,1]$.  Finally, we define $g : [0,1] \to \{0,1\}$ by setting
$g(x) = g^*(y)$ where $y \le x$ is the largest value for which $g(y) \neq *$. (If no
such $y$ exists, we fix $g(x) = 0$.)
\medskip

We first claim that $\distance(f,g) \le \frac\eps2$.  To see this, note that
\begin{align*}
\distance(f,g) &= \Pr_x[ f(x) \neq g(x)]  \\
&\le \Pr_x[ g^*(x) = *] + \Pr_x[ f(x) = 0 \wedge g^*(x) = 1] +
\Pr_x[ f(x) = 1 \wedge g^*(x) = 0] \\
&= \Pr_x[ \tau < f_\delta(x) < 1 - \tau] + \Pr_x[ f(x) = 0 \wedge f_\delta(x) \ge 1 - \tau]
+ \Pr_x[ f(x) = 1 \wedge f_\delta(x) \le \tau].
\end{align*}
We bound the three terms on the RHS individually.  For the first term, 
we observe that $\NSrm_\delta(f,x) = \min \{f_\delta(x), 1 - f_\delta(x)\}$ and
that $\E_x \NSrm_\delta(f,x) = \NS_\delta(f)$.  From these identities and
 Markov's inequality, we have that
$$
\Pr_x[ \tau < f_\delta(x) < 1 - \tau] = \Pr_x[ \NSrm_\delta(f,x) > \tau]
 < \frac{\NS_\delta(f)}\tau = \frac\eps4.
$$
For the second term, let $S \subseteq [0,1]$ denote the set of points $x$ where
$f(x) = 0$ and $f_\delta(x) \ge 1 - \tau$.  Let $\Gamma \subseteq S$ represent
a $\delta$-net of $S$.  Clearly, $|\Gamma| \le \frac1\delta$.  
For $x \in \Gamma$, let $B_x = (x-\delta,x+\delta)$ be a ball
of radius $\delta$ around $x$.  Since $f_\delta(x) \ge 1 - \tau$, the intersection of
$S$ and $B_x$ has mass at most $|S \cap B_x| \le \tau\delta$.  Therefore, the total mass
of $S$ is at most $|S| \le |\Gamma| \tau\delta = \tau$.  By the bounds on the noise
sensitivity of $f$ in the lemma's statement, we therefore have 
$$
\Pr_x[ f(x) = 0 \wedge f_\delta(x) \ge 1 - \tau] \le \tau \le \tfrac\eps8.
$$
Similarly, we obtain the same bound on the third term.  As a result, $\distance(f,g)
\le \frac\eps4 + \frac\eps8 + \frac\eps8 = \frac\eps2$, as we wanted to show.
\medskip

We now want to show that $g$ is a union of $m \le d \delta(1 + \frac\eps2)$ intervals. 
Each left boundary of an interval in $g$ occurs at a point $x \in [0,1]$ where $g^*(x) = *$,
where the maximum $y \le x$ such that $g^*(y) \neq *$ takes the value $g^*(y) = 0$, and
where the minimum $z \ge x$ such that $g^*(z) \neq *$ has the value $g^*(z) = 1$.
In other words, for each left boundary of an interval in $g$, there exists an interval $(y,z)$
such that $f_\delta(y) \le \tau$, $f_\delta(z) \ge 1 - \tau$, and for each $y < x < z$, 
$f_\delta(x) \in (\tau, 1-\tau)$.  Fix any interval $(y,z)$.
Since $f_\delta$ is the convolution of $f$ with a uniform kernel of width $2\delta$,
it is Lipschitz continuous (with Lipschitz constant $\frac1{2\delta}$).  So there exists
$x \in (y, z)$ such that the conditions $f_\delta(x) = \frac12$, $x-y \ge 2\delta(\frac12-\tau)$, and
$z-x \ge 2\delta(\frac12 - \tau)$ all hold.  As a result,
$$
\int_y^z \mathrm{NS}_\delta(f,t) \,\d t = 
\int_y^{x} \mathrm{NS}_\delta(f,t) \,\d t + \int_{x}^z \mathrm{NS}_\delta(f,t) \,\d t \ge
2\delta(\tfrac12 - \tau)^2.
$$
Similarly, for each right boundary of an interval in $g$, we have an interval $(y,z)$ such that
$$
\int_y^z \mathrm{NS}_\delta(f,t) \,\d t \ge
2\delta(\tfrac12 - \tau)^2.
$$
The intervals $(y,z)$ for the left and right boundaries are all disjoints, so
$$
\NS_\delta(f)  \ge \sum_{i=1}^{2m} \int_{y^i}^{z^i} \mathrm{NS}_\delta(f,t) \,\d t \ge 2m \frac\delta2(1-2\tau)^2.
$$
This means that
$$
m \le \frac{d \delta(1 + \eps/4)}{\delta(1 - 2\tau)^2} \le d (1 + \tfrac\eps2)
$$
and $g$ is a union of at most $d(1 + \frac\eps2)$ intervals, as we wanted to show.
\medskip

Finally, we want to show that any function that is the union of
$m \le d(1 + \frac\eps2)$ intervals is $\frac\eps2$-close to a union of $d$ intervals.
Let $\ell_1,\ldots,\ell_m$ represent the lengths of the intervals in $g$.  
Clearly, $\ell_1 + \cdots + \ell_m \le 1$, so there must be a set $S$ of $m -d \le d\eps/2$ 
intervals in $f$ with total length 
$$
\sum_{i \in S} \ell_i \le \frac{m -d}{m} \le \frac{d\eps/2}{d(1 + \frac\eps2)} < \frac\eps2.
$$
Consider the function $h : [0,1] \to \{0,1\}$ obtained by removing the intervals in $S$ 
from $g$ (i.e., by setting $h(x) = 0$ for the values $x \in [b_{2i-1}, b_{2i}]$ for some $i \in S$).  
The function $h$ is a union of $d$ intervals and $\distance(g,h) \le \frac\eps2$.  This completes
the proof, since $\distance(f,h) \le \distance(f,g) + \distance(g,h) \le \eps$.
\end{proof}

\section{Proofs for Testing LTFs}
\label{app:ltf}
\newcommand{\by}{\mathbf{y}}

We complete the proof that LTFs can be tested with $\tilde{O}(\sqrt{n})$ samples in this section.

\subsection{Proof of Lemma~\ref{lem:MORS}}

The proof of Lemma~\ref{lem:MORS} uses the Hermite decomposition of functions. We begin
by introducing this notion and related definitions.

\begin{definition}
The \emph{Hermite polynomials} are a set of polynomials
$h_0(x) = 1, h_1(x) = x, h_2(x) = \tfrac1{\sqrt{2}}(x^2-1),\ldots$
that form a complete orthogonal basis for (square-integrable)
functions $f : \R \to \R$ over the inner product space defined by the
inner product $\left< f, g \right> = \E_{x}[ f(x)g(x) ]$, where the 
expectation is over the standard Gaussian distribution $\calN(0,1)$.
\end{definition}

\begin{definition}
For any $S \in \bbN^n$, define $H_S = \prod_{i=1}^n h_{S_i}(x_i)$.
The \emph{Hermite coefficient} of $f : \R^n \to \R$
corresponding to $S$ is
$
\hat{f}(S) = \left< f, H_S \right> =
\E_{x}[ f(x) H_S(x)]
$
and the \emph{Hermite decomposition} of $f$ is
$
f(x) = \sum_{S \in \bbN^n} \hat{f}(S) H_S(x).
$
The \emph{degree} of the coefficient $\hat{f}(S)$ is $|S| := \sum_{i=1}^n S_i$.
\end{definition}

The connection between linear threshold functions and the Hermite
decomposition of functions is revealed by the following key lemma of
Matulef et al.~\cite{MORS09a}.

\begin{lemma}[Matulef et al.~\cite{MORS09a}]
\label{lem:MORS-orig}
There is an explicit continuous function $W : \R \to \R$ with bounded
derivative $\|W'\|_\infty \le 1$ and peak value $W(0) = \frac2\pi$ such that every
linear threshold function $f : \R^n \to \{-1,1\}$ satisfies
$
\sum_{i=1}^n \hat{f}(e_i)^2 = W( \E_x f ).
$
Moreover, every function $g : \R^n \to \{-1,1\}$ that satisfies
$
\left| \sum_{i=1}^n \hat{g}(e_i)^2 - W( \E_x g) \right| \le 4 \eps^3,
$
is $\eps$-close to being a linear threshold function.
\end{lemma}

In other words, Lemma~\ref{lem:MORS-orig} shows that $\sum_i \hat{f}(e_i)^2$ characterizes
linear threshold functions. 
To obtain Lemma~\ref{lem:MORS}, it suffices to show that this sum is equivalent to
$\E_{x,y}[ f(x) f(y) \left< x,y \right> ]$.  This identity is easily obtained:

\begin{lemma}
\label{lem:sumweights}
For any function $f : \R^n \to \R$, we have
$
\sum_{i=1}^n \hat{f}(e_i)^2 = \E_{x,y}[ f(x) f(y) \left< x, y \right>].
$
\end{lemma}

\begin{proof}
Applying the Hermite decomposition of $f$ and linearity of expectation,
$$
\E_{x,y}[ f(x) f(y) \left<x,y\right>] =
\sum_{i=1}^n \sum_{S,T \in \bbN^n} \hat{f}(S) \hat{f}(T)
\E_x[ H_S(x) x_i] \E_y[ H_T(y) y_i].
$$
By definition, $x_i = h_1(x_i) = H_{e_i}(x)$. The orthonormality of the
Hermite polynomials therefore guarantees that
$
\E_x[ H_S(x) H_{e_i}(x) ] = \mathbf{1}[ S\!=\!e_i ].
$
Similarly, $\E_y[ H_T(y) y_i] = \mathbf{1}[ T\!=\!e_i ]$.
\end{proof}

\subsection{Analysis of {\sc LTF Tester}} 

We now complete the analysis of the {\sc LTF Tester} algorithm.

For a fixed function $f : \R^n \to \R$, define $g : \R^n \times \R^n \to \R$ to be 
$g(x,y) = f(x) f(y) \left< x, y \right>$.
Let $g^* : \R^n \times \R^n \to \R$ be the truncation of $g$ defined by setting
$$
g^*(x,y) = \begin{cases}
f(x) f(y) \left< x, y \right> & \mbox{if } |\left<x,y\right>| \le \sqrt{4n\log(4n/\eps^3)} \\
0 & \mbox{otherwise.}
\end{cases}
$$ 
Our goal is to estimate $\E g$. The following lemma shows that $\E g^*$ provides a good
estimate of this value.

\begin{lemma}
\label{lem:truncation}
Let $g, g^* : \R^n \times \R^n \to \R$ be defined as above. Then $|\E g - \E g^*| \le \tfrac12 \eps^3$.
\end{lemma}

\begin{proof}
For notational clarity, fix $\tau = \sqrt{4n\log(4n/\eps^3)}$. 
By the definition of $g$ and $g^*$ and with the trivial bound $|f(x) f(y) \left< x, y \right>| \le n$ we have
$$
|\E g - \E g^*| = \left| \Pr_{x,y}\big[ \left|\left<x,y\right>\right| > \tau\big] \cdot
\E_{x,y}\Big[ f(x) f(y) \left<x,y\right> \,\big|\, \left|\left<x,y\right>\right| > \tau \Big] \right| 
\le n \cdot \Pr_{x,y}\big[ \left|\left<x,y\right>\right| > \tau\big].
$$
The right-most term can be bounded with a standard Chernoff argument. By Markov's inequality
and the independence of the variables $x_1,\ldots,x_n,y_1,\ldots,y_n$,
$$
\Pr_{x,y}\big[ \left<x,y\right> > \tau\big] = \Pr\big[ e^{t\left<x,y\right>} > e^{t\tau} \big]
\le \frac{\E e^{t\left<x,y\right>}}{e^{t\tau}} = \frac{\prod_{i=1}^n \E e^{t x_i y_i}}{e^{t\tau}}.
$$
The moment generating function of a standard normal
random variable is $\E e^{ty} = e^{t^2/2}$, so
$$
\E_{x_i,y_i}\big[ e^{t x_i y_i} \big] = \E_{x_i} \big[ \E_{y_i} e^{t x_i y_i} \big] = \E_{x_i} e^{(t^2/2) x_i^{\,2}}.
$$
When $x \sim \calN(0,1)$, the random variable $x^2$ has a $\chi^2$ distribution with 1 degree of 
freedom.  The moment generating function of this variable is $\E e^{tx^2} = \sqrt{\frac{1}{1-2t}} = 
\sqrt{1 + \frac{2t}{1 - 2t}}$ for any $t < \frac12$. Hence,
$$
\E_{x_i} e^{(t^2/2) x_i^{\,2}} \le \sqrt{1 + \frac{t^2}{1 - t^2}} \le e^{\frac{t^2}{2(1-t^2)}}
$$
for any $t < 1$. Combining the above results and setting $t = \frac{\tau}{2n}$ yields
$$
\Pr_{x,y}\big[ \left<x,y\right> > \tau\big] \le e^{\frac{n t^2}{2(1-t^2)} - t\tau}
\le e^{-\frac{\tau^2}{4n}} = \tfrac{\eps^3}{4n}.
$$
The same argument shows that $\Pr[ \left<x,y\right> < -\tau ] \le \frac{\eps^3}{4n}$ as well.
\end{proof}

The reason we consider the truncation $g^*$ is that its smaller $\ell_\infty$ norm will enable us to
apply a strong Bernstein-type inequality on the concentration of measure of the U-statistic estimate
of $\E g^*$.

\begin{lemma}[Arcones~\cite{Arc95}] 
\label{lem:arcones2}
For a symmetric function $h : \R^n \times \R^n \to \R$, 
let $\Sigma^2 = \E_x[ \E_y[ h(x,y) ]^2 ] - \E_{x,y}[ h(x,y) ]^2$, let $b = \|h - \E h\|_{\infty}$, and
let $U_m(h)$ be a random variable obtained by drawing $x^1,\ldots,x^m$ independently at random
and setting $U_m(h) = {m \choose 2}^{-1} \sum_{i < j} h(x^{i},x^{j})$.
Then for every $t > 0$,
$$
\Pr[ |U_m(h) - \E h| > t ] \le 4 \exp\left( \frac{ mt^2 }{ 8 \Sigma^2 + 100bt } \right).
$$
\end{lemma}

We are now ready to complete the proof of the upper bound of Theorem~\ref{thm:ltf}.

\begin{theorem}[Upper bound in Theorem~\ref{thm:ltf}, restated]
Linear threshold functions can be tested over the standard $n$-dimensional Gaussian distribution 
with $O(\sqrt{n \log n})$ queries in both the active and passive testing models.
\end{theorem}

\begin{proof}
Consider the \textsc{LTF-Tester} algorithm.  When
the estimates $\tilde\mu$ and $\tilde\nu$ satisfy 
$$
|\tilde\mu - \E f| \le \eps^3   \qquad \mbox{and} \qquad
|\tilde\nu - \E[f(x) f(y) \left<x,y\right>]| \le \eps^3,
$$
Lemmas~\ref{lem:MORS-orig} and~\ref{lem:sumweights} guarantee that the
algorithm correctly distinguishes LTFs from functions that are far from LTFs. To complete the proof,
we must therefore show that the estimates are within the specified error bounds with probability at least
$2/3$.

The values $f(x^1),\ldots,f(x^m)$ are independent $\{-1,1\}$-valued random variables. By Hoeffding's 
inequality,
$$
\Pr[ |\tilde\mu - \E f| \le \eps^3 ] \ge 1 - 2e^{-\eps^6m/2} = 1 - 2e^{-O(\sqrt{n})}.
$$
The estimate $\tilde\nu$ is a U-statistic with kernel $g^*$ as defined above.  This kernel satisfies 
$$
\|g^* - \E g^*\|_\infty \le 2 \|g^*\|_\infty = 2\sqrt{4n \log(4n/\eps^3)}
$$ 
and 
$$
\Sigma^2 \le \E_y\big[ \E_x[ g^*(x,y) ]^2 \big] 
= \E_y \big[ \E_x[ f(x)f(y) \left<x,y\right> \mathbf{1}[\left|\left<x,y\right>\right| \le \tau]  ]^2 \big].
$$
For any two functions $\phi,\psi : \R^n \to \R$, when $\psi$ is $\{0,1\}$-valued the Cauchy-Schwarz inequality implies that $\E_x[ \phi(x) \psi(x) ]^2 \le \E_x[ \phi(x) ] \E_x[ \phi(x) \psi(x)^2 ] = \E_x[\phi(x)] \E_x[ \phi(x) \psi(x) ]$ and so $\E_x[ \phi(x) \psi(x) ]^2 \le \E_x[ \phi(x)]$.  Applying this inequality to the expression
for $\Sigma^2$ gives
$$
\Sigma^2 \le \E_y \big[ \E_x[ f(x)f(y) \left<x,y\right> ]^2 \big] 
=  \E_y \big[ \big( \sum_{i=1}^n f(y) y_i \E_x[ f(x) x_i ]\big)^2 \big] 
= \sum_{i,j} \hat{f}(e_i) \hat{f}(e_j) \E_y [ y_i y_j] = \sum_{i=1}^n \hat{f}(e_i)^2.
$$
By Parseval's identity, we have $\sum_i \hat{f}(e_i)^2 \le \|\hat{f}\|_2^2 = \|f\|_2^2 = 1$. 
Lemmas~\ref{lem:truncation} and~\ref{lem:arcones2} imply that
$$
\Pr[ |\tilde\nu - \E g| \le \eps^3 ] = \Pr[ |\tilde\nu - \E g^*| \le \tfrac12 \eps^3 ]
\ge 1 - 4e^{-\frac{mt^2}{8 + 200\sqrt{n \log(4n/\eps^3)}t}} \ge \tfrac{11}{12}.
$$
The union bound completes the proof of correctness.
\end{proof}

\section{Proofs for Testing Disjoint Unions}
\label{app:disjoint}
\newtheorem*{thmdisjoint}{Theorem~\ref{thm:disjoint}}
\begin{thmdisjoint}[Restated]
Given properties $\calP_1, \ldots, \calP_N$,
if each $\calP_i$ is testable over $D_i$ with $q(\epsilon)$ queries
and $U(\epsilon)$ unlabeled samples, then their disjoint union $\calP$
is testable over the combined distribution $D$ with
$O(q(\epsilon/2)\cdot (\log^3 \frac{1}{\epsilon}))$ queries and
$O(U(\epsilon/2)\cdot (\frac{N}{\epsilon}\log^3\frac{1}{\epsilon}))$
unlabeled samples.
\end{thmdisjoint}

\newcommand{\qd}{q_{\delta}}
\newcommand{\Ud}{U_{\delta}}
\begin{proof}
Let $p = (p_1, \ldots, p_N)$ denote the mixing weights for
distribution $D$; that is, a random draw from $D$ can be viewed as
selecting $i$ from distribution $p$ and then selecting $x$ from $D_i$.
We are given that each $\calP_i$ is testable with failure probability
$1/3$ using using $q(\epsilon)$ queries and $U(\epsilon)$ unlabeled
samples.  By repetition, this implies that each is testable with
failure probability $\delta$ using
$\qd(\epsilon)=O(q(\epsilon)\log(1/\delta))$ queries and 
$\Ud(\epsilon)=O(U(\epsilon)\log(1/\delta))$ unlabeled samples, where
we will set $\delta = \epsilon^2$.
We now test property $\calP$ as follows:

\bigskip

For $\epsilon' = 1/2, 1/4, 1/8, \ldots, \epsilon/2$ do:
\begin{quote}
Repeat $O(\frac{\epsilon'}{\epsilon}\log(1/\epsilon))$ times:
\begin{enumerate}
\setlength{\itemsep}{1pt}
\setlength{\parskip}{0pt}
\item Choose a random $(i,x)$ from $D$.
\item Sample until either $\Ud(\epsilon')$ samples have been drawn
from $D_i$ or $(8N/\epsilon) \Ud(\epsilon')$ samples total have been
drawn from $D$, whichever comes first.
\item In the former case, run the tester for property $\calP_i$ with
parameter $\epsilon'$, making $\qd(\epsilon')$ queries. If the tester
rejects, then reject. 
\end{enumerate}
\end{quote}

If all runs have accepted, then accept.

\bigskip
\noindent
First to analyze the total number of queries and samples, since we can
assume $q(\epsilon) \geq 1/\epsilon$ and $U(\epsilon) \geq
1/\epsilon$, we have $\qd(\epsilon')\epsilon'/\epsilon =
O(\qd(\epsilon/2))$ and $\Ud(\epsilon')\epsilon'/\epsilon =
O(\Ud(\epsilon/2))$ for $\epsilon' \geq \epsilon/2$.  Thus, the total
number of queries made is at most
$$\sum_{\epsilon'} \qd(\epsilon/2)\log(1/\epsilon) = O\left(q(\epsilon/2)
\cdot \log^3 \frac{1}{\epsilon}\right)$$ 
and the total number of unlabeled samples is at most
$$\sum_{\epsilon'} \frac{8N}{\epsilon}\Ud(\epsilon/2)\log(1/\epsilon)
= O\left(U(\epsilon/2) \frac{N}{\epsilon}\log^3\frac{1}{\epsilon}\right).$$
Next, to analyze correctness, if indeed $f \in \calP$ then each call
to a tester rejects with probability at most $\delta$ so the overall
failure probability is at most
$(\delta/\epsilon)\log^2(1/\epsilon) < 1/3$; thus it suffices to
analyze the case that $\distance_D(f,\calP) \geq \epsilon$.  

\medskip
\noindent
If $\distance_D(f,\calP) \geq \epsilon$ then $\sum_{i:p_i \geq
\epsilon/(4N)} p_i \cdot \distance_{D_i}(f_i,\calP_i) \geq 3\epsilon/4$.  
Moreover, for indices $i$ such that $p_i \geq \epsilon/(4N)$, with high
probability Step 2 draws $\Ud(\epsilon')$ samples, so we may
assume for such indices the tester for $\calP_i$ is indeed run in Step 3. 
Let $I = \{i: p_i \geq \epsilon/(4N) \mbox{ and }
\distance_{D_i}(f_i,\calP_i) \geq \epsilon/2\}$.  Thus, we have
$$\sum_{i \in I} p_i \cdot \distance_{D_i}(f_i,\calP_i) \geq \epsilon/4.$$
Let $I_{\epsilon'} = \{i \in I :
\distance_{D_i}(f_i,\calP_i)\in[\epsilon', 2\epsilon']\}$.  Bucketing
the above summation by values $\epsilon'$ in this way implies that for
some value $\epsilon' \in \{\epsilon/2, \epsilon, 2\epsilon,
\ldots, 1/2\}$, we have: $$\sum_{i \in I_{\epsilon'}} p_i \geq
\epsilon/(8\epsilon'\log(1/\epsilon)).$$  This in turn implies that with
probability at least $2/3$, the run of the algorithm for this value of
$\epsilon'$ will find such an $i$ and reject, as desired.
\end{proof}

\section{Proofs for Testing Dimensions}
\label{app:dimension}

\subsection{Passive Testing Dimension (proof of Theorem \ref{thm:passivedim})}

{\bf Lower bound:} 
By design, $\dpassive$ is a lower bound on the number of examples
needed for passive testing.  In particular, if $\dist_S(\pin,\pfar)
\leq 1/4$, and if the target is with probability $1/2$ chosen from
$\pin$ and with probability $1/2$ chosen from $\pfar$, even the Bayes
optimal tester will fail to identify the correct distribution with
probability $\frac{1}{2} \sum_{y \in \{0,1\}^{|S|}} \min(\pin_{S}(y),
\pfar_{S}(y)) = \frac{1}{2}(1 - \dist_S(\pin,\pfar)) \geq 3/8$.   The
definition of $\dpassive$ implies that there exist $\pin \in \PinC$,
$\pfar \in \PfarC$ such that $\Pr_S(\dist_S(\pin,\pfar)\leq 1/4) \geq
3/4$.  Since $\pfar$ has a $1-o(1)$ probability mass on functions that
are $\epsilon$-far from $\calP$, this implies that over random draws
of $S$ and $f$, the overall failure probability of any tester is at least
$(1-o(1))(3/8)(3/4) > 1/4$.  
Thus, at least $\dpassive+1$ random labeled examples are required 
if we wish to guarantee error at most $1/4$.  This in turn implies
$\Omega(\dpassive)$ examples are needed to guarantee error at most $1/3$.

\medskip
\noindent
{\bf Upper bound:} We now argue that $O(\dpassive)$ examples are
\emph{sufficient} for testing as well.  Toward this end, consider the
following natural testing game.  The adversary chooses a function $f$
such that either $f \in \calP$ or $\distance_D(f,\calP)\geq \epsilon$.
The tester picks a function $A$ that maps labeled samples of size $k$
to accept/reject.  That is, $A$ is a deterministic passive testing
algorithm.  The payoff to the tester is the probability that $A$ is
correct when $S$ is chosen iid from $D$ and labeled by $f$.

If $k > \dpassive$ then (by definition of $\dpassive$) we know that for
any distribution $\pin$ over $f \in \calP$ and any distribution
$\pfar$ over $f$ that are $\epsilon$-far from $\calP$, we have 
$\Pr_{S\sim D^k} ( \dist_S(\pin,\pfar) > 1/4) > 1/4$.  
We now need to translate this into a statement about the value of the game.
Note that any mixed strategy of the adversary can be viewed as $\alpha\pin + (1-\alpha)\pfar$ for some distribution $\pin$ over $f \in \calP$, some distribution $\pfar$ over $f$ that are $\epsilon$-far from $\calP$ and some $\alpha \geq 0$. 
The key fact we can use is that against such a mixed strategy,
the Bayes optimal predictor has error exactly
\[\sum_{y} \min( \alpha \pin_{S}(y), (1-\alpha) \pfar_{S}(y))
\leq \max(\alpha,1-\alpha) \sum_{y} \min( \pin_{S}(y), \pfar_{S}(y) ),\]
while
\begin{align*}
\sum_{y} \min( \pin_{S}(y), \pfar_{S}(y) )
& = 1 - (1/2) \sum_{y} | \pin_{S}(y) - \pfar_{S}(y) |
= 1 - \dist_{S}(\pin,\pfar),
\end{align*}
so that the Bayes risk is at most 
$\max(\alpha,1-\alpha) (1 - \dist_{S}(\pin,\pfar))$.
Thus, for any $\alpha \in [7/16,9/16]$, if $\dist_{S}(\pin,\pfar) > 1/4$, 
the Bayes risk is less than $(9/16)(3/4) = 27/64$.
Furthermore, any $\alpha \notin [7/16,9/16]$ has Bayes risk at most $7/16$.
Thus, since $\dist_{S}(\pin,\pfar) > 1/4$ with probability $> 1/4$
(and if $\dist_S(\pin,\pfar)\leq 1/4$ then the error probability of
the Bayes optimal predictor is at most $1/2$), 
for any mixed strategy of the adversary, 
the Bayes optimal predictor has risk less than
$(1/4)(7/16) + (3/4)(1/2) = 31/64$.

Now, applying the minimax theorem we get that for $k = \dpassive+1$,
there exists a mixed strategy $A$ for the tester such that for any
function chosen by the adversary, the probability the tester is
correct is at least $1/2 + \gamma$ for a constant $\gamma >0$ (namely,
$1/64$). 
We can now boost the correctness probability using a constant-factor
larger sample.  Specifically, let $m = c \cdot (\dpassive + 1)$ for
some constant $c$, and consider a 
sample $S$ of size $m$.  The tester simply partitions the sample $S$ into $c$
pieces, runs $A$ separatately on each piece, and then takes majority
vote.  This gives us that $O(\dpassive)$ examples are sufficient for
testing with any desired constant success probability in $(1/2,1)$.  

\subsection{Coarse Active Testing Dimension (proof of Theorem
\ref{thm:coarse})} 

{\bf Lower bound:}
First, we claim that any nonadaptive active testing algorithm that
uses $\leq \dcoarse/c$ 
label requests must use more than $n^c$ unlabeled examples (and thus no 
algorithm can succeed using $o(\dcoarse)$ labels).
To see this, suppose algorithm $A$ draws $n^c$ unlabeled examples.  The number
of subsets of size $\dcoarse/c$ is at most $n^{\dcoarse}/6$ (for
$\dcoarse/c \geq 3$).  So, by definition of $\dcoarse$ and the union bound,
with probability at least $5/6$, \emph{all} such subsets $S$ satisfy the
property that $\dist_S(\pin, \pfar) < 1/4$.  Therefore, for any sequence of
such label requests, the labels observed will not be sufficient to
reliably distinguish $\pin$ from $\pfar$.  Adaptive active testers can
potentially choose their next point to query based on labels observed
so far, but the above immediately implies that even adaptive active
testers cannot use an $o(\log(\dcoarse))$
queries.


\medskip
\noindent
{\bf Upper bound:}
For the upper bound, we modify the argument from the passive 
testing dimension analysis as follows.
We are given that for any distribution $\pin$ over $f \in \calP$ and
any distribution $\pfar$ over $f$ that are $\epsilon$-far from
$\calP$, for $k=\dcoarse+1$, 
we have $\Pr_{S \sim D^k}(\dist_S(\pin,\pfar) >1/4 ) > n^{-k}$.
Thus, we can sample $U \sim D^m$ with $m = \Theta( k \cdot n^{k} )$,
and partition $U$ into subsamples $S_1, S_2, \ldots, S_{cn^k}$ of size
$k$ each.  With high probability, at least one of these subsamples
$S_i$ will have $\dist_S(\pin,\pfar) > 1/4$.  We can thus simply
examine each subsample, identify one such that
$\dist_S(\pin,\pfar)>1/4$, and query the points in that sample.  As in
the proof for the passive bound, this implies that for any strategy
for the adversary in the associated testing game, the best response 
has probability at least $1/2 + \gamma$ of success for some constant
$\gamma>0$.  By the minimax theorem, this implies a testing strategy
with success probability $1/2+\gamma$ which can then be boosted to
$2/3$.  The total number of label requests used in the process is only
$O(\dcoarse)$.

Note, however, that this strategy uses a number of unlabeled examples
$\Omega(n^{\dcoarse+1})$.  Thus, this only implies an active tester
for $\dcoarse = O(1)$.  Nonetheless, combining the upper and lower
bounds yields Theorem \ref{thm:coarse}.

\subsection{Active Testing Dimension (proof of Theorem
\ref{thm:activedim})}

{\bf Lower bound:} for a given sample $U$, we can think of an adaptive
active tester as a decision tree, defined based on which example it
would request the label of next given that the previous requests have
been answered in any given way.  A tester making $k$ queries would
yield a decision tree of depth $k$.  By definition of $\dactive(u)$,
with probability at least $3/4$ (over choice of $U$), any such tester
has error probability at least $(1/4)(1-o(1))$ over the choice of $f$.
Thus, the overall failure probability is at least $(3/4)(1/4)(1-o(1) >
1/8$. 

\medskip
\noindent
{\bf Upper bound:}  We again consider the natural testing game.  We
are given that for any mixed strategy of the adversary with equal
probability mass on functions in $\calP$ and functions $\epsilon$-far
from $\calP$, the best response of the tester has expected payoff at
least $(1/4)(3/4) + (3/4)(1/2) = 9/16$.   This in turn implies that
for any mixed strategy at all, the best response of the tester has
expected payoff at least $33/64$ (if the adversary puts more than
$17/32$ probability mass on either type of function, the tester
can just guess that type with expected payoff at least $17/32$, else
it gets payoff at least $(1 - 1/16)(9/16) > 33/64$).  By the minimax
theorem, this implies existence of a randomized strategy for the
tester with at least this payoff.  We then boost correctness using
$c\cdot u$ samples and $c \cdot \dactive(u)$ queries, running the
tester $c$ times on disjoint samples and taking majority vote.

\subsection{Lower Bounds for Testing LTFs (proof of Theorem~\ref{thm:ltf-dimensions})}
\label{subsec:ltf-dimensions}

We complete the proofs for the lower bounds on the query complexity for testing linear threshold functions in the active and passive models.  
This proof has three parts. First, in Section~\ref{subsec:ltf1}, we introduce some preliminary (technical) results that will be used to prove the lower bounds on the passive and coarse dimensions of testing LTFs. In Section~\ref{subsec:ltf2}, we introduce some more preliminary results regarding random matrices that we will use to bound the active dimension of the class. Finally, in Section~\ref{subsec:ltf3}, we put it all together and complete the proof of Theorem~\ref{thm:ltf-dimensions}.

The high level idea of our proof is we will show that for a random LTF given by weight vector $\mathbf{w} \sim  N(0,  I_{n \times n })$, 
even if we are given the exact value $\mathbf{w} \cdot \mathbf{x}$ for each example $\mathbf{x}$ (rather than just $sgn(\mathbf{w} \cdot \mathbf{x})$), 
we still could not distinguish these values from random Gaussian noise. 
Towards this end, for two distributions $P$, $Q$ over $\mathcal{R}^K$, we use $|| P -Q ||$ to denote the total variation distance between them. 
For instance, given two distributions $\pi$, $\pi'$ over boolean functions, and given a sample $S$, we have $d_S (\pi, \pi') = \| \pi_S - \pi'_S \|$.

\subsubsection{Preliminaries for $\dpassive$ and $\dcoarse$}
\label{subsec:ltf1}

Fix any $K$.
Let the dataset $X = \{x_1, x_2, \cdots, x_{K}\}$ be sampled iid according to a $N(0,I_{n\times n})$ distribution\footnote{In fact, essentially the same argument would work for many other product distributions, including uniform on $\{-1,+1\}^{n}$}.
Let $\mathbf{X} \in \mathcal{R}^{K \times n}$ be the corresponding data matrix. 


Suppose
$\mathbf{w} \sim \calN(0, I_{n \times n})$.
We let
\[\mathbf{z} = \mathbf{X} \mathbf{w},\]
and note that the conditional distribution of $\mathbf{z}$ given $X$
is normal with mean $0$ and ($X$-dependent) covariance matrix, which we denote by $\Sigma$.  
Further applying a threshold function to $\mathbf{z}$ gives $\mathbf{y}$ as the predicted label vector of an LTF.


\begin{lemma}
\label{lemma_2}
For any square non-singular matrix $B$, $\log(det(B)) = Tr(\log(B)),$ where $\log(B)$ is the matrix logarithm of $B$.
\end{lemma}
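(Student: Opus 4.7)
The plan is to reduce the identity to the well-known fact that $\det(B)$ equals the product of the eigenvalues of $B$ while $\operatorname{Tr}(\log B)$ equals the sum of the eigenvalues of $\log B$, so the claim amounts to checking that the spectrum of $\log B$ consists exactly of the logarithms of the eigenvalues of $B$. Since $B$ is non-singular, none of its eigenvalues vanish and the principal matrix logarithm (or any fixed branch) is well-defined.

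First I would pass to the Jordan canonical form $B = P J P^{-1}$, where $J$ is block upper triangular with the eigenvalues $\lambda_1,\dots,\lambda_n$ of $B$ on its diagonal. Since $\log$ is defined via a convergent power series on each Jordan block $J_i = \lambda_i I + N$ (with $N$ nilpotent and $\lambda_i \neq 0$), one sees directly that $\log(J_i)$ is upper triangular with the single entry $\log(\lambda_i)$ repeated along the diagonal. Consequently $\log J$ is upper triangular with diagonal entries $\log(\lambda_1),\dots,\log(\lambda_n)$, and the similarity invariance of both sides gives $\log B = P (\log J) P^{-1}$.

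Next I would carry out the two trace/determinant computations in parallel:
\begin{align*}
\operatorname{Tr}(\log B) &= \operatorname{Tr}\!\bigl(P (\log J) P^{-1}\bigr) = \operatorname{Tr}(\log J) = \sum_{i=1}^n \log(\lambda_i),\\
\log \det(B) &= \log \det(PJP^{-1}) = \log \det(J) = \log\!\prod_{i=1}^n \lambda_i = \sum_{i=1}^n \log(\lambda_i),
\end{align*}
and the two right-hand sides agree (modulo $2\pi i \mathbb{Z}$ when branches matter, which we fix consistently).

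The main obstacle is purely bookkeeping: one has to be a little careful about the branch of the logarithm when $B$ has negative or non-real eigenvalues, and to justify that $\log(J_i)$ really does have diagonal $\log(\lambda_i)$ for a non-diagonalizable Jordan block. Both points are handled cleanly by fixing the principal branch and expanding $\log(\lambda_i I + N) = (\log \lambda_i)\,I + \sum_{k\ge 1} \frac{(-1)^{k+1}}{k}(N/\lambda_i)^k$, where the series is actually a finite sum because $N$ is nilpotent.
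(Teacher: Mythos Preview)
Your proof is correct but takes a different route from the paper's. The paper simply invokes the standard identity $\det(\exp(A)) = \exp(\operatorname{Tr}(A))$ (citing a reference), takes logarithms of both sides, and substitutes $B = \exp(A)$, $A = \log(B)$. Your argument is more self-contained: you pass to Jordan form, verify directly that $\log(J_i)$ has $\log(\lambda_i)$ on its diagonal (using the finite nilpotent expansion), and then read off both $\operatorname{Tr}(\log B)$ and $\log\det(B)$ as $\sum_i \log(\lambda_i)$. The paper's version is shorter but leans on a cited fact whose own proof is essentially the eigenvalue computation you carry out; your version exposes that computation explicitly and also handles the branch and non-diagonalizable issues more carefully than the paper does.
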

\begin{proof}
From~\cite{Higham:2008:FM}, we know since every eigenvalue of $A$ corresponds to the eigenvalue of $\exp(A)$, thus
\begin{eqnarray}
\label{eqn_2}
det(\exp(A)) = \exp(Tr(A))
\end{eqnarray}
where $\exp(A)$ is the matrix exponential of $A$. 
Taking logarithm of both sides of~(\ref{eqn_2}), we get 
\begin{eqnarray}
\label{eqn_3}
\log(det(\exp(A))) = Tr(A) 
\end{eqnarray}
Let $B = \exp(A)$ (thus $A = \log(B)$). Then~(\ref{eqn_3}) can rewritten as
$\log(det(B)) = Tr(\log B)$.
\end{proof}

Fixing X, let $P_{\mathbf{z}/\sqrt{n} | X}$ denote the conditional distribution over $\mathbf{z}/\sqrt{n}$ given by choosing $\mathbf{w} \sim N(0, I_{n \times n })$ and letting $\mathbf{z} = \mathbf{X}\mathbf{w}$.

\begin{lemma}
\label{lemma_0}
For sufficiently large $n$, and a value 
$K = \Omega(\sqrt{n / \log(K/\delta)})$,
with probability at least 
$1-\delta$
(over $X$),
\[ \| \P_{(\mathbf{z} / \sqrt{n}) | X} - \calN(0,I) \| \leq 1/4.\]  
\end{lemma}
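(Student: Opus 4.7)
The plan is to compute the KL divergence between the conditional law of $\mathbf{z}/\sqrt{n}$ given $X$ and the standard Gaussian, and then convert to total variation via Pinsker's inequality. Conditioned on $X$, $\mathbf{z}/\sqrt{n}$ is a centred Gaussian with covariance $\Sigma = \mathbf{X}\mathbf{X}^{T}/n$, so the standard Gaussian KL formula together with Lemma~\ref{lemma_2} gives
\[
KL\bigl(\calN(0,\Sigma) \,\|\, \calN(0,I_K)\bigr) \;=\; \tfrac{1}{2}\bigl(Tr(\Sigma) - K - Tr(\log \Sigma)\bigr) \;=\; \tfrac{1}{2}\sum_{i=1}^{K}\bigl(\lambda_i - 1 - \log \lambda_i\bigr),
\]
where $\lambda_1,\ldots,\lambda_K$ are the eigenvalues of $\Sigma$. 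Since $\lambda - 1 - \log\lambda \leq (\lambda-1)^2$ whenever $|\lambda - 1| \leq 1/2$, on the event that every eigenvalue of $\Sigma$ lies in $[1/2, 3/2]$ this yields $KL \leq \tfrac{1}{2}\|\Sigma - I\|_F^{2}$.

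It then remains to control $\|\Sigma - I\|_F^{2}$ with high probability, which I would do entrywise. For the diagonal, $\Sigma_{ii} - 1 = (\|\mathbf{x}_i\|^{2} - n)/n$ is a standardised $\chi^{2}_n$ variable and the usual Gaussian/$\chi^{2}$ tail gives $|\Sigma_{ii} - 1| = O(\sqrt{\log(K/\delta)/n})$ with probability $1 - \delta/(2K)$. For the off-diagonals ($i \neq j$), conditioned on $\mathbf{x}_j$ the entry $\Sigma_{ij}$ is $\calN(0, \|\mathbf{x}_j\|^{2}/n^{2})$, which combined with the diagonal bound gives $|\Sigma_{ij}| = O(\sqrt{\log(K/\delta)/n})$ with probability $1 - \delta/(2K^{2})$. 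Union-bounding over the $K^{2}$ entries, with probability $\geq 1 - \delta$,
\[
\|\Sigma - I\|_F^{2} \;\leq\; O\!\left(\frac{K^{2}\log(K/\delta)}{n}\right),
\]
and in particular $\|\Sigma - I\|_{op} \leq \|\Sigma - I\|_F \leq 1/2$ as soon as $n$ dominates $K^{2}\log(K/\delta)$, which validates the Taylor estimate above.

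Combining the two bounds and applying Pinsker,
\[
\|P_{\mathbf{z}/\sqrt{n}\,|\,X} - \calN(0,I)\| \;\leq\; \sqrt{\tfrac{1}{2}KL} \;=\; O\!\left(K\sqrt{\tfrac{\log(K/\delta)}{n}}\right),
\]
and this is at most $1/4$ precisely in the regime $K = O(\sqrt{n/\log(K/\delta)})$ stated in the lemma. The main technical obstacle is the simultaneous spectral control needed to invoke the Taylor expansion of $\lambda - 1 - \log\lambda$; the crude inequality $\|\cdot\|_{op} \leq \|\cdot\|_F$ suffices here, but if a sharper constant were required one would instead use a Davidson--Szarek type bound on the extreme singular values of $\mathbf{X}$ to get sub-Gaussian, rather than union-bounded, spectral control and avoid the $\log K$ factor.
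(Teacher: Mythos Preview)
Your argument is correct and tracks the paper's proof closely: both control the entries of $\Sigma = \mathbf{X}\mathbf{X}^{T}/n$ via concentration (chi-squared on the diagonal, sub-Gaussian on the off-diagonal) plus a union bound over the $K^{2}$ pairs, both then bound the KL divergence between $\calN(0,\Sigma)$ and $\calN(0,I)$, and both finish with Pinsker. The only real difference is in how the KL is handled once the entrywise bound $|(\Sigma-I)_{ij}|\le\beta$ is in place. You diagonalise and use the scalar estimate $\lambda-1-\log\lambda\le(\lambda-1)^{2}$ (valid on $[1/2,3/2]$) to get $KL\le\tfrac12\|\Sigma-I\|_{F}^{2}\le\tfrac12 K^{2}\beta^{2}$. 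The paper instead writes $\log\det\Sigma = Tr(\log(I+A))$ with $A=\Sigma-I$, Taylor-expands the matrix logarithm, and bounds each $|Tr(A^{i})|$ by $(K\beta)^{i}$ purely from the entrywise control, summing the geometric series under the condition $K\beta<1/2$. Both routes give the same $O(K^{2}\log(K/\delta)/n)$ bound on the KL and hence the same $O\!\bigl(K\sqrt{\log(K/\delta)/n}\bigr)$ bound on total variation. Your eigenvalue route is a little cleaner and keeps the $Tr(\Sigma)-K$ term explicit in the KL formula; the paper's route never needs to talk about eigenvalues at all, working entirely entrywise, which is why it leans on Lemma~\ref{lemma_2} and the matrix Taylor series.
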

\begin{proof}
For sufficiently large $n$, for any pair $\mathbf{x}_i$ and $\mathbf{x}_j$,
by Bernstein's inequality, with probability $1-\delta^{\prime}$,
\[
\mathbf{x}_i^T \mathbf{x}_j \in \left[-2\sqrt{n \log\frac{2}{\delta^{\prime}}}, 2\sqrt{ n \log\frac{2}{\delta^{\prime}}} \right]
\]
for $i \neq j$, while concentration inequalities for $\chi^{2}$ random variables \cite{massart:00} imply that with probability $1-\delta^{\prime}$,
\[
\mathbf{x}_i^T \mathbf{x}_j \in \left[n-2\sqrt{n \log\frac{2}{\delta^{\prime}}}, n+ 2\sqrt{ n \log\frac{2}{\delta^{\prime}}} + 2\log\frac{2}{\delta} \right]
\]
for $i = j$.
By the union bound, setting $\delta^{\prime} = \delta / K^2$, the above inclusions hold simultaneously for all $i,j$ with probability at least $1-\delta$.
For the remainder of the proof we suppose this
(probability $1-\delta$)
event occurs.

For $i \neq j$,
\begin{eqnarray*}
Cov(z_i/\sqrt{n}, z_j/\sqrt{n} |X) &=& \frac{\E[z_i z_j |X]}{n} \\
&=& \frac{1}{n}\E\left[(\sum_{l=1}^n w_l \cdot x_{il}) (\sum_{l=1}^n w_l \cdot x_{jl})|X\right]\\
& = & \frac{1}{n}\E\left[\sum_{l,m = 1,1}^{n,n} w_l w_m x_{il} x_{jm} |X \right]\\
& = & \frac{1}{n}\E\left[\sum_{l} w_l^2 x_{il} x_{jl}|X\right] 
  =  \frac{1}{n}\E\left[\sum_l x_{il} x_{jl}|X\right]
\\ &=&  \frac{1}{n} \sum_l x_{il} x_{jl} 
=   \frac{1}{n}\mathbf{x}_i^T \mathbf{x}_j  \in  \left[- 2\sqrt{\frac{\log (2 K^2/\delta)}{n}}, 2\sqrt{\frac{\log (2 K^2/\delta)}{n}}\right] 
\end{eqnarray*}
because $\E[w_l w_m] =0$ (for $l \neq m$) and $\E[w_l^2] =1$.
Similarly, we have 
\[
Var(z_i / \sqrt{n} | X) = \frac{1}{n} \mathbf{x}_i^{T} \mathbf{x}_i \in \left[ 1 - 2\sqrt{\frac{\log\frac{2K^2}{\delta}}{n}}, 1 + 2\sqrt{\frac{\log\frac{2 K^2}{\delta}}{n}} + \frac{2 \log\frac{2 K^2}{\delta}}{n}\right].
\]
Let $\beta = 2\sqrt{\frac{\log(2 K^2/\delta)}{n}} + \frac{2 \log\frac{2 K^2}{\delta}}{n}$.
%
Thus $\Sigma$ is a $K \times K$ matrix, with $\Sigma_{ii} \in [1-\beta,1+\beta]$ for $i = 1, \cdots, K$ and $\Sigma_{ij} \in [-\beta, \beta]$ for all $i \neq j$.

Let $P_1 = \calN(0, \Sigma^{K \times K})$ and $P_2 = \calN(0, I^{K \times K})$.
As the density \[p_1(\mathbf{z})  = \frac{1}{\sqrt{(2 \pi)^K \mbox{det}(\Sigma)}} \exp(-\frac{1}{2} \mathbf{z}^T \Sigma^{-1} \mathbf{z})\]
and the density \[p_2(\mathbf{z}) = \frac{1}{\sqrt{(2\pi)^K}} \exp(-\frac{1}{2}\mathbf{z}^T\mathbf{z})\]
Then $L_1$ distance between the two distributions $P_1$ and $P_2$
\begin{equation*}
 \int |d P_2 - d P_1| \leq 2\sqrt{K(P_1, P_2)} = 2 \sqrt{(1/2) \log \det(\Sigma)},
 \end{equation*}
 where this last equality is by \cite{jason}.
By Lemma~\ref{lemma_2}, $\log(\det(\Sigma)) = Tr(\log (\Sigma))$.  Write $A = \Sigma - I$. By the Taylor series 
\[\log(I + A) = -\sum_{i =1}^\infty \frac{1}{i}(I - (I + A))^i   =  -\sum_{i =1}^\infty \frac{1}{i}(- A)^i\]
Thus,
\begin{eqnarray}
Tr(\log(I + A)) &= &\sum_{i=1}^\infty \frac{1}{i} Tr(( - A)^i).
\label{eqn_6}
\end{eqnarray}
Every entry in $A^i$ can be expressed as a sum of at most $K^{i-1}$ terms, 
each of which can be expressed as a product of exactly $i$ entries from $A$.
Thus, every entry in $A^i$ is in the range $[- K^{i-1} \beta^i, K^{i-1} \beta^i]$.
This means $Tr(A^i) \leq K^i \beta^i$.
Therefore, if $K \beta < 1/2$, since $Tr(A) = 0$, 
the expansion of $Tr(\log(I + A)) \leq \sum_{i=1}^{\infty} K^i \beta^i = O\left(K \sqrt{\frac{\log (K/\delta)}{n}}\right)$.

In particular, for some $K = \Omega(\sqrt{n / \log(K/\delta)})$, 
$Tr(\log(I + A))$ is bounded by the appropriate constant to obtain the stated result.
\end{proof}





\subsubsection{Preliminaries for $\dactive$}
\label{subsec:ltf2}

Given an $n \times m$ matrix $A$ with real entries $\{a_{i,j}\}_{i \in [n], j \in [m]}$, the \emph{adjoint}
(or \emph{transpose} -- the two are equivalent since $A$ contains only real values)
of $A$ is the $m \times n$ matrix $A^*$ whose $(i,j)$-th entry equals $a_{j,i}$.  Let us write
$\lambda_1 \ge \lambda_2 \ge \cdots \ge \lambda_m$ to denote the eigenvalues
of $\sqrt{A^* A}$. These values are the \emph{singular values} of $A$.  
The matrix $A^* A$ is positive semidefinite, so the singular values of $A$ are all non-negative.
We write $\lambda_{\max}(A) = \lambda_1$ and $\lambda_{\min}(A) = \lambda_m$ to represent
its largest and smallest singular values.  Finally, the \emph{induced norm} (or \emph{operator norm})
of $A$ is
$$
\| A \| = \max_{x \in \R^m \setminus \{0\}} \frac{\|Ax \|_2}{\|x\|_2} = \max_{x \in \R^m : \|x\|_2^2 = 1} 
\| Ax \|_2.
$$
For more details on these definitions, see any standard
linear algebra text (e.g.,~\cite{Shi77}). We will also use the following strong concentration bounds
on the singular values of random matrices.

\begin{lemma}[See~{\cite[Cor.\,5.35]{Ver11}}]
\label{lem:randmat}
Let $A$ be an $n \times m$ matrix whose entries are independent standard normal random variables.
Then for any $t > 0$, the singular values of $A$ satisfy
\begin{equation}
\sqrt{n} - \sqrt{m} - t \le \lambda_{\min}(A) \le \lambda_{\max}(A) \le \sqrt{n} + \sqrt{m} + t
\end{equation}
with probability at least $1 - 2e^{-t^2/2}$.
\end{lemma}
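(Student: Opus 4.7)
The lemma is quoted from~\cite{Ver11}, so one legitimate approach is simply to invoke the cited corollary. To sketch the proof, the plan is to combine Gaussian concentration of measure with a Gaussian comparison inequality that controls the expectations of the extremal singular values.

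First, I would observe that both $A \mapsto \lambda_{\max}(A) = \|A\|$ and $A \mapsto \lambda_{\min}(A)$ are $1$-Lipschitz with respect to the Frobenius norm on $\R^{n \times m}$ (equivalently, the Euclidean norm when $A$ is viewed as a vector in $\R^{nm}$). For $\lambda_{\max}$ this is just the triangle inequality for the operator norm; for $\lambda_{\min}$ it follows from the variational characterization of singular values together with the Weyl-type perturbation bound $|\lambda_i(A) - \lambda_i(B)| \le \|A - B\|$. Since the entries of $A$ are i.i.d.\ $N(0,1)$, Borell's Gaussian concentration inequality then gives, for any $1$-Lipschitz $f:\R^{nm}\to\R$ and any $t > 0$,
\[
\Pr\bigl[\,|f(A) - \E f(A)| \ge t\,\bigr] \le 2 e^{-t^2/2}.
\]
Applying this with $f = \lambda_{\max}$ and $f = \lambda_{\min}$ pins each down to within $t$ of its mean with the claimed probability.

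The remaining step is to bound the expectations by $\E\,\lambda_{\max}(A) \le \sqrt{n} + \sqrt{m}$ and $\E\,\lambda_{\min}(A) \ge \sqrt{n} - \sqrt{m}$. For this I would apply Gordon's min-max comparison theorem to the Gaussian process $(x,y) \mapsto \langle Ax, y \rangle$ indexed by pairs of unit vectors $x \in \R^m$, $y \in \R^n$, comparing it to the simpler process $\sum_j g_j x_j + \sum_i h_i y_i$ with independent standard Gaussians $g_j, h_i$. The upper bound on $\E\,\lambda_{\max}$ follows from Slepian's inequality alone (an iterated supremum over $x$ and $y$), while the lower bound on $\E\,\lambda_{\min}$ requires the full $\min\max$ strengthening due to Gordon. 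Combining these expectation bounds with the Borell concentration inequality above yields the lemma.

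The main obstacle is this last step: the Lipschitz observation and Borell concentration are essentially off-the-shelf, but the lower bound $\E\,\lambda_{\min}(A) \ge \sqrt{n} - \sqrt{m}$ in the non-square case genuinely requires a Gaussian comparison inequality and cannot be derived by elementary moment computations.
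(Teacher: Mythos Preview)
Your proposal is correct and matches the paper's approach: the paper does not prove the lemma either but simply remarks that it follows from Talagrand's inequality and Gordon's Theorem for Gaussian matrices, deferring to~\cite{Ver11} for details. Your sketch is exactly this argument (what you call Borell's Gaussian concentration inequality is the same Gaussian Lipschitz concentration the paper attributes to Talagrand), so there is nothing to add.
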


The proof of this lemma follows from Talagrand's inequality and Gordon's Theorem for Gaussian
matrices.  See~\cite{Ver11} for the details.  The lemma implies the following corollary
which we will use in the proof of our theorem.

\begin{corollary}
\label{cor:randmat}
Let $A$ be an $n \times m$ matrix whose entries are independent standard normal random variables.
For any $0 < t < \sqrt{n} - \sqrt{m}$, the $m \times m$ matrix $\frac1n A^* A$ satisfies both inequalities
\begin{equation}
\left\| \tfrac1n A^* A - I \right\| \le 3 \frac{\sqrt{m} + t}{\sqrt{n}}
\qquad \mbox{and} \qquad
\det\left( \tfrac1n A^* A \right) \ge e^{-m \left(\frac{(\sqrt{m} + t)^2}{n} + 2\frac{\sqrt{m} + t}{\sqrt{n}}\right)}
\end{equation}
with probability at least $1 - 2e^{-t^2/2}$.
\end{corollary}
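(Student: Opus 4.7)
The plan is to deduce both inequalities from Lemma~\ref{lem:randmat} by translating the singular-value control for $A$ into eigenvalue control for $M := \tfrac{1}{n}A^*A$. Writing $y := (\sqrt{m}+t)/\sqrt{n}$, the hypothesis $t < \sqrt{n}-\sqrt{m}$ is exactly $y < 1$. Applying Lemma~\ref{lem:randmat} with the given $t$ yields, with probability at least $1 - 2e^{-t^2/2}$, that every singular value of $A$ lies in $[\sqrt{n}(1-y), \sqrt{n}(1+y)]$. I condition on this event throughout. Since $M$ is positive semidefinite and its eigenvalues $\mu_1, \dots, \mu_m$ are the squared singular values of $A$ divided by $n$, they all lie in $[(1-y)^2, (1+y)^2]$.

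For the operator-norm inequality, $M - I$ is symmetric, so $\|M - I\| = \max_i |\mu_i - 1|$. The two extreme cases give $(1+y)^2 - 1 = 2y + y^2$ and $1 - (1-y)^2 = 2y - y^2$, both bounded by $2y + y^2 \le 3y$ since $y < 1$ (so $y^2 \le y$). This establishes the first inequality $\|M - I\| \le 3(\sqrt{m}+t)/\sqrt{n}$.

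For the determinant inequality, $\det M = \prod_i \mu_i \ge (1-y)^{2m}$, and the task reduces to the scalar estimate $(1-y)^{2m} \ge \exp\!\bigl(-m(y^2 + 2y)\bigr)$, i.e., a lower bound on $2m\log(1-y)$ valid for $y \in (0,1)$. I expect this to be the main obstacle: the operator-norm half is essentially immediate once the eigenvalues of $M$ are pinned down, but matching the precise form of the exponent $-m(y^2 + 2y)$ requires choosing the right Taylor-series estimate for $\log(1-y)$ rather than a cruder bound such as $\log(1-y) \ge -y/(1-y)$; it may even be cleanest to route through $\log\det M = \mathrm{Tr}(\log M)$ via Lemma~\ref{lemma_2} and expand $\log(I + (M-I))$ using the operator-norm bound already established in the previous paragraph.
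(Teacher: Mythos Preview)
Your argument for the operator-norm inequality is correct and is essentially the paper's argument recast in eigenvalue language: with $y=(\sqrt m+t)/\sqrt n<1$, the eigenvalues of $M=\tfrac1nA^*A$ lie in $[(1-y)^2,(1+y)^2]$, whence $\|M-I\|=\max_i|\mu_i-1|\le 2y+y^2\le 3y$.

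For the determinant inequality, however, there is a genuine gap, and it is not one that either of your suggested workarounds can close. From the eigenvalue control you correctly obtain $\det M\ge(1-y)^{2m}$ and reduce to the scalar claim $(1-y)^{2}\ge e^{-(y^2+2y)}$. But this scalar inequality is \emph{false} for every $y\in(0,1)$: taking logarithms it would say $2\ln(1-y)\ge-2y-y^2$, whereas the Taylor expansion gives
\[
2\ln(1-y)\;=\;-2y-y^2-\tfrac{2}{3}y^3-\tfrac12 y^4-\cdots\;<\;-2y-y^2.
\]
(Numerically, at $y=\tfrac12$ one has $(1-y)^2=0.25$ while $e^{-(y^2+2y)}=e^{-5/4}\approx 0.2865$.) Routing through $\mathrm{Tr}(\log M)$ via Lemma~\ref{lemma_2} does not help: any such expansion again lands on the lower bound $2m\ln(1-y)$ for $\ln\det M$, which sits strictly below the target $-m(y^2+2y)$.

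It is worth noting that the paper's own proof hits the very same wall. It too bounds $\det M\ge(\lambda_{\min}(A)^2/n)^m=(1-y)^{2m}$ and then appeals to ``the elementary inequality $1+x\le e^x$'' to finish; but $1+x\le e^x$ (with $x=-y$) yields $1-y\le e^{-y}$, an \emph{upper} bound on $(1-y)^{2m}$, which points the wrong way. So the difficulty you flagged is real and is shared by the paper: the stated exponent $-m(y^2+2y)$ is slightly too tight to follow from the singular-value control alone. Replacing it by $2m\ln(1-y)^{-1}$, or by any weakening such as $-m\cdot 2y/(1-y)$ (which does follow from $\ln(1-y)\ge -y/(1-y)$), would make the argument go through and still suffice for the application downstream.
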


\begin{proof}
When there exists $0 < z < 1$ such that $1-z \le \frac1{\sqrt{n}} \lambda_{\max}(A) \le 1 + z$, the
identity $\frac1{\sqrt{n}}\lambda_{\max}(A) = \| \frac1{\sqrt{n}} A \| = \max_{\|x\|_2^2 = 1} \| \frac1{\sqrt{n}} A x \|_2$ implies that
$$
1 - 2z \le (1-z)^2 \le \max_{\|x\|_2^2 = 1} \left\| \tfrac1{\sqrt{n}} A x \right\|_2^2 \le (1 + z)^2 \le 1 + 3z.
$$
These inequalities and the identity
$\|\frac1n A^*A - I\| = \max_{\|x\|_2^2 = 1} \|\frac1{\sqrt{n}} Ax\|_2^2 - 1$ imply that
$
-2z \le \| \tfrac1n A^*A - I \| \le 3z.
$
Fixing $z = \frac{\sqrt{m} + t}{\sqrt{n}}$ and applying Lemma~\ref{lem:randmat} completes the proof
of the first inequality.

Recall that $\lambda_1\le \cdots \le \lambda_m$ are the eigenvalues of $\sqrt{A^* A}$.  Then
$$
\det(\tfrac1n A^* A) = \frac{\det(\sqrt{A^* A})^2}{n} = \frac{(\lambda_1\cdots \lambda_m)^2}{n} \ge \left( \frac{\lambda_1^{\,2}}{n} \right)^m
= \left( \frac{\lambda_{\min}(A)^2}{n} \right)^m.
$$
Lemma~\ref{lem:randmat} and the elementary inequality $1 + x \le e^x$ complete the proof of
the second inequality.
\end{proof}

\subsubsection{Proof of Theorem~\ref{thm:ltf-dimensions}}
\label{subsec:ltf3}

\newtheorem*{thmltfdim}{Theorem~\ref{thm:ltf-dimensions}}
\begin{thmltfdim}[Restated]
For linear threshold functions under the standard Gaussian distribution in 
$\reals^n$,
$\dpassive = \Omega(\sqrt{n/\log(n)})$
and $\dactive = \Omega( ( n / \log(n))^{1/3})$.
\end{thmltfdim}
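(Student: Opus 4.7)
The plan is to instantiate the ``noise indistinguishability'' template sketched at the start of the section: take the random LTF $\mathbf{w}\sim N(0, I_{n\times n})$ as the hard instance and argue that the induced distribution over labels on any $K$-element sample has total variation at most $1/4$ from the iid $\pm 1$ uniform distribution, whenever $K$ is strictly below the claimed lower bound. Because taking signs is deterministic, the data processing inequality lets us reduce this to showing that $\mathbf{z}_S/\sqrt{n}$ is close in total variation to $\calN(0, I_K)$, where $S$ indexes the $K$ queried examples.

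For $\dpassive$ I would apply Lemma~\ref{lemma_0} directly. Choosing $\delta = 1/8$ and $K = c\sqrt{n/\log n}$ for a small enough constant $c$, the lemma gives $\|P_{\mathbf{z}/\sqrt{n}\mid X} - \calN(0, I_K)\| \le 1/4$ with probability at least $7/8$ over the iid Gaussian sample $X$. Applying data processing to pass to signs and integrating over $X$ then bounds the total variation between the joint $(X,\mathbf{y})$ distributions under a random LTF and under iid $\pm 1$ labels by $3/8$, which forbids any $K$-sample passive tester from achieving constant advantage and yields $\dpassive = \Omega(\sqrt{n/\log n})$.

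For $\dactive$ the tester may draw an unlabeled pool of size $U = n^{O(1)}$ (any polynomial suffices to capture the active-learning sample complexity of halfspaces) and adaptively query $K$ labels. The centerpiece will be a \emph{uniform-over-subsets} version of Lemma~\ref{lemma_0}: with high probability over the pool, $\|P_{\mathbf{z}_S/\sqrt{n}\mid X} - \calN(0, I_K)\| \le 1/4$ simultaneously for every $K$-subset $S\subseteq [U]$. To prove it I would fix $S$, apply Corollary~\ref{cor:randmat} with $m=K$ to the $n\times K$ Gaussian matrix $\mathbf{X}_S^T$ to get $\|\tfrac1n \mathbf{X}_S \mathbf{X}_S^T - I\| \le 3(\sqrt{K}+t)/\sqrt{n}$ with failure probability $2e^{-t^2/2}$, and set $t = \Theta(\sqrt{K\log U})$ so that the resulting failure probability $\le U^{-K}$ survives a union bound over the $\binom{U}{K} \le U^K$ subsets. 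Writing $\Sigma_S := \tfrac1n \mathbf{X}_S \mathbf{X}_S^T$, this uniformly gives $\|\Sigma_S - I\| = O(\sqrt{K \log U / n})$. Substituting $A = \Sigma_S - I$ into the $Tr(\log(I+A))$ expansion from the proof of Lemma~\ref{lemma_0}, but replacing the entrywise bound $|Tr((-A)^i)|\le K^i \beta^i$ by the operator-norm bound $|Tr((-A)^i)| \le K \|A\|^i$ (valid because $A$ is symmetric, so $\|A^i\| = \|A\|^i$), gives $|\log\det \Sigma_S| = O(K \|\Sigma_S - I\|) = O(K^{3/2}\sqrt{\log U/n})$. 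The Gaussian TV bound from Lemma~\ref{lemma_0}'s proof then yields $\|P_{\mathbf{z}_S/\sqrt{n}\mid X} - \calN(0, I_K)\| = O(K^{3/4}(\log U/n)^{1/4})$, which is below $1/4$ once $K = O((n/\log n)^{1/3})$.

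The main obstacle will be bridging this uniform-over-$S$ bound to an actual adaptive testing lower bound, since an adaptive tester's query set is itself a random variable depending on previously revealed labels. The standard resolution is to view a $K$-query adaptive tester as a decision tree of depth $K$ with at most $U^K$ leaves indexed by ordered query sequences: on each leaf the conditional law of the $K$ observed labels differs in TV by at most $1/4$ (by the uniform guarantee just proved), and averaging over leaves controls the total variation between transcripts in the two worlds. Data processing through the final accept/reject step and integrating over $X$ then finish the argument and yield $\dactive = \Omega((n/\log n)^{1/3})$.
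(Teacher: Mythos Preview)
Your $\dpassive$ argument is correct and is essentially the paper's proof: both apply Lemma~\ref{lemma_0} with a constant $\delta$, pass from $\mathbf{z}/\sqrt{n}$ to labels by data processing, and read off $K=\Omega(\sqrt{n/\log n})$.

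For $\dactive$, your setup also coincides with the paper's --- apply Corollary~\ref{cor:randmat} to each $K$-column submatrix of a pool of size $U=n^{O(1)}$, take $t=\Theta(\sqrt{K\log U})$, and union bound over the $\binom{U}{K}\le U^K$ subsets to get $\|\Sigma_S-I\|=O(\sqrt{K\log U/n})$ uniformly in $S$. The gap is the last step: a uniform bound on $\|P_{\mathbf{z}_S/\sqrt n\mid X}-\calN(0,I_K)\|$ over \emph{fixed} $K$-subsets $S$ does \emph{not} control the total variation between transcripts of an \emph{adaptive} tester, and the ``averaging over leaves'' you sketch is not valid. Writing the transcript TV as $\tfrac12\sum_{\ell}|P_1^{S_\ell}(y_\ell)-P_2^{S_\ell}(y_\ell)|$, each summand is a single term borrowed from the TV sum for the particular set $S_\ell$ reached at that leaf; an adaptive tree can steer toward the large terms of different $S$'s simultaneously. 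Already with $K=2$ and three pool points one can have $\|P_1^S-P_2^S\|=1/4$ for every pair $S$ while the adaptive transcript TV equals $1/2$. The paper sidesteps this by proving a \emph{pointwise} density-ratio bound: using both conclusions of Corollary~\ref{cor:randmat} it shows $r(x)=\det(\Sigma_S)^{-1/2}\exp\!\big(\tfrac12 x^{T}(\Sigma_S^{-1}-I)x\big)\le 6/5$ whenever $\|x\|_2^2\le 3K$, uniformly in $S$, and then invokes Lemma~\ref{lem:ptlb}. This works for adaptive testers precisely because the transcript likelihood ratio at any leaf $\ell$ equals the fixed-subset ratio $P_1^{S_\ell}(y_\ell)/P_2^{S_\ell}(y_\ell)$, so a pointwise bound transfers directly while a TV bound does not. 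Your operator-norm estimate on $\Sigma_S-I$ (together with the determinant half of Corollary~\ref{cor:randmat}) is exactly what this density-ratio argument needs, so the repair is local: replace the $\log\det$/TV step by the ratio bound and appeal to Lemma~\ref{lem:ptlb} rather than to a leaf-averaging argument.
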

\begin{proof} 

Let $K$ be as in Lemma~\ref{lemma_0} for $\delta = 1/4$.
Let 
$D = \{(x_1,y_1),\ldots,(x_K,y_K)\}$ denote the sequence of labeled data points under the random LTF based on $\mathbf{w}$.
Furthermore, let $D^{\prime} = \{(x_1,y_1^{\prime}),\ldots,(x_K,y_K^{\prime})\}$ denote the sequence of labeled data points under a target function that assigns an independent random label to each data point.
Also let $\mathbf{z}_i = (1/\sqrt{n}) \mathbf{w}^{T} x_i$, and let $\mathbf{z}^{\prime} \sim N(0,I_{K \times K})$.
Let $E = \{(x_1,\mathbf{z}_1),\ldots,(x_K,\mathbf{z}_K)\}$ and $E^{\prime} = \{(x_1,\mathbf{z}_1^{\prime}),\ldots,(x_K,\mathbf{z}_K^{\prime})\}$.
Note that we can think of $y_i$ and $y_i^{\prime}$ as being functions of $\mathbf{z}_i$ and $\mathbf{z}_i^{\prime}$, respectively.  Thus, letting $X = \{x_1,\ldots,x_K\}$, by Lemma~\ref{lemma_0}, with probability at least $3/4$,
\begin{equation*}
\| \P_{D|X} - \P_{D^{\prime}|X} \| \leq \|\P_{E|X} - \P_{E^{\prime}|X} \| \leq 1/4.
\end{equation*}
This suffices for the claim that $\dpassive = \Omega(K) = \Omega( \sqrt{n / \log(n)} )$.

Next we turn to the lower bound on $\dactive$.
Let us now introduce two distributions $\Dyes$ and $\Dno$ over linear threshold functions and 
functions that (with high probability) are far from linear threshold functions, respectively. We draw
a function $f$ from $\Dyes$ by first drawing a vector $\bw \sim \calN(0,I_{n \times n})$ from the
$n$-dimensional standard normal distribution.  We then define $f : x \mapsto \sgn(\frac1{\sqrt{n}} x \cdot \bw)$.
To draw a function $g$ from $\Dno$, we define $g(x) = \sgn(\by_x)$ where each $\by_x$ variable
is drawn independently from the standard normal distribution $\calN(0,1)$.

Let $\bX \in \R^{n \times q}$ be a random matrix obtained by drawing $q$ vectors from
the $n$-dimensional normal distribution $\calN(0, I_{n \times n})$ and setting these
vectors to be the columns of $\bX$.  
Equivalently, $\bX$ is the random matrix whose entries are independent
standard normal variables.  
When we view $\bX$ as a set of $q$ queries to a function $f \sim \Dyes$
or a function $g \sim \Dno$, we get $f(\bX) = \sgn(\frac1{\sqrt{n}} \bX \bw)$ and 
$g(\bX) = \sgn(\by_{\bX})$. Note
that $\frac1{\sqrt{n}}\bX \bw \sim \calN(0, \frac1n\bX^*\bX)$ and $\by_{\bX} \sim \calN(0, I_{q \times q})$.  To apply
Lemma~\ref{lem:ptlb} it suffices to show that the ratio of the pdfs for both these random variables is
bounded by $\frac65$ for all but $\frac15$ of the probability mass.

The pdf $p : \R^q \to \R$ of a $q$-dimensional random vector from the distribution $\calN_{q \times q}(0,\Sigma)$ is
$$
p(x) = (2\pi)^{-\frac q2} \det(\Sigma)^{-\frac12} e^{-\frac12 x^T \Sigma^{-1} x}.
$$
Therefore, the ratio function $r : \R^q \to \R$ between the pdfs of $\frac1{\sqrt{n}} \bX \bw$ and 
of $\by_{\bX}$ is
$$
r(x) =  \det(\tfrac1n \bX^* \bX)^{-\frac12} e^{\frac12 x^T ((\frac1n \bX^* \bX)^{-1}- I) x}.
$$
Note that 
$$
x^T ((\tfrac1n \bX^* \bX)^{-1}- I) x \le \| (\tfrac1n \bX^* \bX)^{-1} - I \| \|x\|_2^2
= \| \tfrac1n\bX^* \bX - I \| \|x\|_2^2,
$$
so by Lemma~\ref{lem:randmat} with probability at least $1 - 2e^{-t^2/2}$ we have
$$
r(x) \le e^{\frac q2\left(\frac{(\sqrt{q} + t)^2}{n} + 2\frac{\sqrt{q} + t}{\sqrt{n}}\right)
 + 3\frac{\sqrt{q} + t}{\sqrt{n}} \|x\|_2^2}.
$$
By a union bound, for $U \sim \mathcal{N}(0,I_{n\times n})^{u}$, $u \in \nats$ with $u \geq q$,
the above inequality for $r(x)$ is true for all subsets of $U$ of size $q$, 
with probability at least $1 - u^{q} 2 e^{-t^2 / 2}$.
Fix $q = n^{\frac13}/ (50 (\ln(u))^{\frac{1}{3}})$ and $t = 2\sqrt{q \ln(u)}$.  
Then $u^{q} 2 e^{-t^2 / 2} \leq 2 u^{-q}$, which is $< 1/4$ for any sufficiently large $n$.
When $\|x\|_2^2 \le 3q$ then for large $n$,
$r(x) \le e^{74/625} < \frac65$. To complete the proof, it suffices to show that when $x \sim \calN(0,I_{q \times q})$, 
the probability that $\|x\|_2^2 > 3q$ is at most $\frac{1}{5}2^{-q}$. The
random variable $\|x\|_2^2$ has a $\chi^2$ distribution with $q$ degrees of freedom and expected
value $\E \|x\|_2^2 = \sum_{i=1}^q \E x_i^{\,2} = q$.  Standard concentration bounds for $\chi^2$
variables imply that
$$
\Pr_{x \sim \calN(0, I_{q \times q})}[ \|x\|_2^2 > 3q] \le e^{-\frac43 q} < \tfrac15 2^{-q},
$$
as we wanted to show.
Thus, Lemma~\ref{lem:ptlb} implies 
${\rm err}^{*}({\rm DT}_q, \Fair(\pin,\pfar,U)) > 1/4$ holds whenever this $r(x)$ inequality is satisfied for all subsets of $U$ of size $q$;
we have shown this happens with probabiliity greater than $3/4$, so we must have $\dactive \geq q$.
\end{proof}

If we are only interested in bounding $\dcoarse$, the proof can be somewhat simplified.
Specifically, taking $\delta = n^{-K}$ in Lemma~\ref{lemma_0} implies that with probability at least $1-n^{-K}$,
\begin{equation*}
\| \P_{D|X} - \P_{D^{\prime}|X} \| \leq \|\P_{E|X} - \P_{E^{\prime}|X} \| \leq 1/4,
\end{equation*}
which suffices for the claim that 
$\dcoarse = \Omega(K)$, where $K = \Omega( \sqrt{n / K\log(n)})$:
in particular, $\dcoarse = \Omega( (n / \log(n))^{1/3})$.




%



\section{Testing Semi-Supervised Learning Assumptions}
\label{app:margin-proofs}
\label{sec:ssl}
We now consider testing of common assumptions made in semi-supervised
learning \cite{ssl:book06}, where unlabeled data, together with
assumptions about how the target function and data distribution
relate, are used to constrain the search space.   As mentioned in
Section \ref{sec:disjoint},
one such assumption we can test using our generic disjoint-unions
tester is the cluster assumption,  that if data lies in $N$
identifiable clusters, then points in the same cluster
should have the same label.  We can in fact achieve the following
tighter bounds:
\begin{theorem}
We can test the cluster assumption with active testing using
$O(N/\epsilon)$ unlabeled examples and $O(1/\epsilon)$ queries.
\end{theorem}
\begin{proof}
Let $p_{i1}$ and $p_{i0}$ denote the probability mass on positive
examples and negative examples respectively in
cluster $i$, so $p_{i1} + p_{i0}$ is the
total probabilty mass of cluster $i$.  Then $\distance(f,\calP) =
\sum_i \min(p_{i1},p_{i0})$.  Thus, a simple tester is to draw a
random example $x$, draw a random example $y$ from $x$'s cluster, and
check if $f(x)=f(y)$.  Notice that with probability {\em exactly}
$\distance(f,\calP)$, point $x$ is in the minority class of its own
cluster, and conditioned on this event, with probability at least
$1/2$, point $y$ will have a different label.  It thus suffices to
repeat this process $O(1/\epsilon)$ times.  One complication is that
as stated, this process might require a large {\em unlabeled} sample,
especially if $x$ belongs to a cluster $i$ such that $p_{i0}+p_{i1}$
is small, so that many draws are needed to find a point $y$ in $x$'s
cluster.  To achieve the given {\em unlabeled} sample bound, we
initially draw an unlabeled sample of size $O(N/\epsilon)$ and simply
perform the above test on the uniform distribution $U$ over that
sample, with distance parameter $\epsilon/2$.  Standard sample
complexity bounds
\cite{Vapnik:book98} imply that $O(N/\epsilon)$ unlabeled points are
sufficient so that if $\distance_D(f,\calP)\geq
\epsilon$ then with high probability, $\distance_U(f,\calP)\geq \epsilon/2$.
\end{proof}

We now consider the property of a function having a large margin with
respect to the underlying distribution: that is, the distribution $D$
and target $f$ are such that any point in the support of $D|_{f=1}$ is
at distance $\gamma$ or more from any point in the support of
$D|_{f=0}$.  This is a common property assumed in graph-based and
nearest-neighbor-style semi-supervised learning algorithms
\cite{ssl:book06}.  Note that we are not additionally requiring the
target to be a linear separator or have any special functional form.
For scaling, we assume that points lie in the unit ball in $R^d$,
where we view $d$ as constant and $1/\gamma$ as our asymptotic
parameter.
Since we are not
assuming any specific functional form for the target, the number of
labeled examples needed for {\em learning} could be as large as
$\Omega(1/\gamma^d)$ by having a distribution with support over
$\Omega(1/\gamma^d)$ points that are all at distance $\gamma$ from
each other (and therefore can be labeled arbitrarily).
Furthermore, passive testing would require
$\Omega(1/\gamma^{d/2})$ samples as this specific case encodes the
cluster-assumption setting with $N =
\Omega(1/\gamma^d)$ clusters.  We will be able to perform active
testing using only $O(1/\epsilon)$ label requests.

First, one distinction between this and other properties we have
been discussing is that it is a property of the {\em relation} between the
target function $f$ and the distribution $D$; i.e., of the combined
distribution $D_f = (D,f)$ over labeled examples.  As a result, 
the natural notion of {\em distance} to this property is in
terms of the variation distance of $D_f$ to the closest $D_*$
satisfying the property.  As a simple example
illustrating the issue, consider $X = [0,1]$, a target $f$ that is
negative on $[0,1/2)$ and positive on $[1/2,1]$, and a distribution $D$
that is uniform but where the region $[1/2,1/2+\gamma]$ is
downweighted to have total probability mass only $1/2^n$.  Such a
$D_f$ is $1/2^n$-close to the property under variation distance, but
would be nearly $1/2$-far from the property if the 
only operation allowed were to change the function $f$.
A second issue is that we will have to also allow some amount of slack on the
$\gamma$ parameter as well.  Specifically, our tester will distinguish
the case that $D_f$ indeed has margin $\gamma$ from the
case that the $D_f$ is $\epsilon$-far from having margin $\gamma'$
where $\gamma' = \gamma(1-1/c)$ for some constant $c>1$; e.g., think
of $\gamma' = \gamma/2$.  This slack can also be seen to be necessary
(see discussion following the proof of Theorem \ref{thm:margin}).
In particular, we have the following.

\newtheorem*{thmmargin}{Theorem~\ref{thm:margin}}
\begin{thmmargin}[Restated]
For any $\gamma$, $\gamma' = \gamma(1-1/c)$ for constant $c>1$, for
data in the unit ball in $R^d$ for constant $d$, we can
distinguish the case that $D_f$ has margin $\gamma$ from the case that
$D_f$ is $\epsilon$-far from margin $\gamma'$ using Active Testing
with $O(1/(\gamma^{2d}\epsilon^{2}))$ unlabeled examples and
$O(1/\epsilon)$ label requests.
\end{thmmargin}
\newcommand{\witness}{E_{witness}}
\begin{proof}
First, partition the input space $X$ (the unit ball in $R^d$) into
regions $R_1, R_2, \ldots, R_N$ of diameter at most $\gamma/(2c)$.  By
a standard volume 
argument, this can be done using $N = O(1/\gamma^d)$ regions
(absorbing ``$c$'' into the $O()$).  Next, we run the cluster-property
tester on these $N$ regions, with distance parameter $\epsilon/4$.
Clearly, if the cluster-tester rejects, then we can reject as well.
Thus, we may assume below that the total impurity within individual
regions is at most $\epsilon/4$.

Now, consider the following weighted graph $G_\gamma$. We have $N$
vertices, one for each of the $N$ regions.  We have an edge $(i,j)$
between regions $R_i$ and $R_j$ if diam$(R_i \cup R_j) < \gamma$.  We
define the {\em weight} $w(i,j)$ of this edge to be
$\min(D[R_i],D[R_j])$ where $D[R]$ is the probability mass in $R$
under distribution $D$.  Notice that if there is no edge between
region $R_i$ and $R_j$, then by the triangle inequality every point in
$R_i$ must be at distance at least $\gamma'$ from every point in
$R_j$.  Also, note that each vertex has degree $O(c^d) = O(1)$, so the
total weight over all edges is $O(1)$.  Finally, note that while
algorithmically we do not know the edge weights precisely, we can
estimate all edge weights to $\pm \epsilon/(4M)$, where $M=O(N)$ is the
total number of edges, using the unlabeled sample size bounds given in
the Theorem statement.  Let $\tilde{w}(i,j)$ denote the estimated
weight of edge $(i,j)$.

Let $\witness$ be the set of edges $(i,j)$ such that one endpoint is
majority positive and one is majority negative.  Note that if $D_f$
satisfies the $\gamma$-margin property, then every edge in $\witness$
has weight 0. On the other hand, if $D_f$ is $\epsilon$-far from the
$\gamma'$-margin property, then the total weight of edges in
$\witness$ is at least $3\epsilon/4$.  The reason is that otherwise one
could convert $D_f$ to $D'_f$ satisfying the margin condition by
zeroing out the probability mass in the lightest endpoint of every
edge $(i,j) \in \witness$, and then for each vertex, zeroing out the
probability mass of points in the minority label of that vertex.
(Then, renormalize to have total probability 1.)  The first step
moves distance at most $3\epsilon/4$ and the second step moves
distance at most $\epsilon/4$ by our assumption of success of the
cluster-tester.  Finally, if the true total weight of edges in
$\witness$ is at least $3\epsilon/4$ then the sum of their estimated
weights $\tilde{w}(i,j)$ is at least $\epsilon/2$.
This implies we can perform our test as follows.  For $O(1/\epsilon)$
steps, do: 
\begin{enumerate}
\item Choose an edge $(i,j)$ with probability proportional to $\tilde{w}(i,j)$.
\item Request the label for a random $x \in R_i$ and $y\in R_j$.
If the two labels disagree, then reject.
\end{enumerate}
If $D_f$ is $\epsilon$-far from the $\gamma'$-margin property, then
each step has probability $\tilde{w}(\witness)/\tilde{w}(E) = O(\epsilon)$ of
choosing a witness edge, and conditioned on choosing a witness edge
has probability at least $1/2$ of detecting a violation.  Thus,
overall, we can test using $O(1/\epsilon)$ labeled examples and 
$O(1/(\gamma^{2d}\epsilon^2))$ unlabeled examples.
\end{proof}

\medskip
\noindent
{\bf On the necessity of slack in testing the margin assumption:}
Consider an instance space $\X = [0,1]^2$ and two distributions over
labeled examples $D_1$ and $D_2$.  Distribution $D_1$ has probability
mass $1/2^{n+1}$ on positive examples at location $(0, i/2^n)$ and
negative examples at $(\gamma', i/2^n)$ for each $i=1,2,\ldots, 2^n$,
for $\gamma' = \gamma(1 - 1/2^{2n})$.  Notice that $D_1$ is $1/2$-far
from the $\gamma$-margin property because there is a matching between
points in the support of $D_1|_{f=1}$ and points in the support of
$D_1|_{f=0}$ where the matched points have distance less than
$\gamma$.  On the other hand, for each $i=1,2,\ldots,2^n$,
distribution $D_2$ has probability mass $1/2^n$ at either a positive
point $(0, i/2^n)$ {\em or} a negative point $(\gamma', i/2^n)$,
chosen at random, but zero probability mass at the other location.
Distribution $D_2$ satisfies the $\gamma$-margin property, and yet
$D_1$ and $D_2$ cannot be distinguished using a polynomial number of
unlabeled examples.


\end{document}